 \newcommand\sub{\@startsection%
     {subsubsection}{5}{0mm}{-1\baselineskip}{.01\baselineskip}%
     {\normalfont\itshape}}
 \renewcommand\subsubsection{\@startsection%
     {subsubsection}{3}{0mm}{-1\baselineskip}{.01\baselineskip}%
     {\normalfont\itshape}}
        \newcommand\Appendix[2][?]{%
            \refstepcounter{section}%
            \addcontentsline{toc}{appendix}%
                {\protect\numberline{\appendixname~\thesection}#1}%
            {\raggedleft\bfseries \appendixname\
                \thesection\par \centering#2\par}%
                \sectionmark{#1}%
                \@afterheading
                \addvspace{\baselineskip}}
        \newcommand\sAppendix[1]{%
            \raggedleft\bfseries\appendixname\par
            \@afterheading\addvspace{\baselineskip}}
\newcolumntype{A}{>{\centering}p{100pt}}
\newlength\savedwidth
\def\coldot{.}%
{\catcode`\.=\active%
    \gdef.{$\egroup\setbox2=\hbox to \dimen0 \bgroup$\coldot}}
\def\rightdots#1{%
    \setbox0=\hbox{$1$}\dimen0=#1\wd0%
    \setbox0=\hbox{$\coldot$}\advance\dimen0 \wd0%
    \setbox2=\hbox to \dimen0 {}%
    \setbox0=\hbox\bgroup\mathcode`\.="8000 $}
\def\endrightdots{$\hfil\egroup\box0\box2}
\newcolumntype{d}[1]{D{.}{.}{#1}}
\newcolumntype{A}{>{\centering}p{100pt}}
\newcolumntype{.}{D{.}{.}{-1}}
\newcolumntype{P}[2]{>{#1\raggedright\arraybackslash}p{#2}}
\DeclareFontFamily{U}{euc}{}
\DeclareFontShape{U}{euc}{m}{n}{<-6>eurm5<6-8>eurm7<8->eurm10}{}%
\theoremstyle{plain}      
\theoremstyle{plain}      
\theoremstyle{plain}      
\theoremstyle{plain}      
\theoremstyle{definition} 
\theoremstyle{definition} 
\theoremstyle{definition} 
\theoremstyle{plain} \newtheorem{cor}{Corollary}
\theoremstyle{definition} 
\theoremstyle{plain} \newtheorem{pro}{Proposition}
\theoremstyle{definition} 
\theoremstyle{definition} 
\theoremstyle{definition} 
\newcounter{nctr}
\newenvironment{3table}{\begin{threeparttable}}{\end{threeparttable}}
\newenvironment{en}{\begin{enumerate}}{\end{enumerate}}
\newcommand\mrow{\multirow}
\newcommand\mcol{\multicolumn}
\newcommand\bb{\mathbb}
\newcommand\te{\text}
\newcommand\ma[1]{\te{\bf{#1}}}
\newcommand\ca{\mathcal}
\newcommand\op{\operatorname}
\newcommand\as{^\ast}
\newcommand\argmin{\operatornamewithlimits{argmin}}
\newcommand\bias{\operatorname{\bb{B}ias}}
\newcommand\cov{\operatorname{\bb{C}ov}}
\newcommand\E{\bb{E}}
\newcommand\for{\,\,\forall\,\,}
\newcommand\iid{\op{iid}}
\newcommand\mse{\op{MSE}}
\newcommand\lb{\lbrace}
\newcommand\lt{\left}
\newcommand\lan{\langle}
\newcommand\pri{^\prime}
\newcommand{\plim}{\operatornamewithlimits{plim}}
\newcommand\rank{\te{rank}}
\newcommand\rb{\rbrace}
\newcommand\rt{\right}
\newcommand\ran{\rangle}
\newcommand\stack{\stackrel} 
\newcommand\tr{\op{tr}}
\newcommand\tth{^\text{th}}
\newcommand\var{\operatorname{\bb{V}ar}}
\newcommand\wh{\widehat}
\newcommand\wti{\widetilde}
\newcommand\R{\bb{R}}  
\newcommand\br{\ma{r}} 
\newcommand\bx{\ma{x}}
\newcommand\by{\ma{y}}
\newcommand\bz{\ma{z}}
\newcommand\bA{\ma{A}} 
\newcommand\bD{\ma{D}} 
\newcommand\bH{\ma{H}} %
\newcommand\bR{\ma{R}} 
\newcommand\bX{\ma{X}}
\newcommand\bZ{\ma{Z}}
\newcommand\bzero{\bm{0}} 
\newcommand\cI{\ca{I}} 
\newcommand\al{\alpha}
\newcommand\be{\beta}
\newcommand\ga{\gamma}
\newcommand\de{\delta}
\newcommand\ep{\varepsilon}
\newcommand\la{\lambda}
\newcommand\sig{\sigma}
\newcommand\bbe{\bm\beta}
\newcommand\bde{\bm\delta}
\newcommand\bep{\bm\varepsilon}
\newcommand\bla{\bm\lambda}
\newcommand\bGa{\bm\Gamma}
\providerobustcmd*{\bigcapdot}{%
  \mathop{%
    \mathpalette\bigop@dot\bigcap
  }%
}
\newrobustcmd*{\bigop@dot}[2]{%
  \setbox0=\hbox{$\bigcup\m@th#1#2$}%
  \vbox{%
    \lineskiplimit=\maxdimen
    \lineskip=-0.6\dimexpr\ht0+\dp0\relax
    \ialign{%
      \hfil##\hfil\cr
      $\m@th\centerdot$\cr
      \box0\cr
    }%
  }%
}
\providerobustcmd*{\cupdot}{%
  \mathop{%
    \mathpalette\op@dot\cup
  }%
}
\providerobustcmd*{\capdot}{%
  \mathop{%
    \mathpalette\op@dot\cap
  }%
}
\newrobustcmd*{\op@dot}[2]{%
  \setbox0=\hbox{$\m@th#1#2$}%
  \vbox{%
    \lineskiplimit=\maxdimen
    \lineskip=-0.9\dimexpr\ht0+\dp0\relax
    \ialign{%
      \hfil##\hfil\cr
      $\m@th\cdot$\cr
      \box0\cr
    }%
  }%
}
\providerobustcmd*{\subsetdot}{%
  \mathop{%
    \mathpalette\subop@dot\subset
  }%
}
\providerobustcmd*{\supsetdot}{%
  \mathop{%
    \mathpalette\subop@dot\supset
  }%
}
\newrobustcmd*{\subop@dot}[2]{%
  \setbox0=\hbox{$\m@th#1#2$}%
  \vbox{%
    \lineskiplimit=\maxdimen
    \lineskip=-0.9\dimexpr\ht0+\dp0\relax
    \ialign{%
      \hfil##\hfil\cr
      $\m@th\cdot$\cr
      \box0\cr
    }%
  }%
}
\providerobustcmd*{\subseteqdot}{%
  \mathop{%
    \mathpalette\subeqop@dot\subseteq
  }%
}
\providerobustcmd*{\supseteqdot}{%
  \mathop{%
    \mathpalette\subeqop@dot\supseteq
  }%
}
\newrobustcmd*{\subeqop@dot}[2]{%
  \setbox0=\hbox{$\m@th#1#2$}%
  \vbox{%
    \lineskiplimit=\maxdimen
    \lineskip=-0.7\dimexpr\ht0+\dp0\relax
    \ialign{%
      \hfil##\hfil\cr
      $\m@th\cdot$\cr
      \box0\cr
    }%
  }%
}
\newcommand*\Otri{\ensuremath{
        \;
        \begin{tikzpicture}
          \draw[thick] (0pt,0pt) circle (6.5pt);
          \draw[thick] (-6pt,+0pt) -- (+5.50pt,-2.75pt);
          \draw[thick] (-6pt,+0pt) -- (+5.50pt,+2.75pt);
        \end{tikzpicture}
        \;
        }
}
\newcommand{\dobigtri}[1]{%
  \vcenter{#1\kern.2ex\hbox{$\Otri$}\kern.2ex}}
\newcommand*\oline[1]{%
  \vbox{%
    \hrule height 0.5pt
    \kern0.5ex
    \hbox{%
      \kern-0.1em
      \ifmmode#1\else\ensuremath{#1}\fi
      \kern-0.1em
    }
  }
}
\newcommand{\cse}{\op{CSE}}
\newcommand{\Cor}{\op{Cor}}
\begin{document}
\sloppy
\begin{frontmatter}
\title{Convex Combination \\ 
       of Ordinary Least Squares and Two-stage Least Squares Estimators}
\runtitle{Convex Combination of OLS and TSLS estimators}
\author{\fnms{Cedric E.~Ginestet}$^{1}$\ead[label=e1]{cedric.ginestet@kcl.ac.uk}\thanksref{t1},}
\author{\fnms{Richard Emsley}$^{2}$, and}
\author{\fnms{Sabine Landau}$^{1}$}
\address{$^{1}$ Biostatistics Department, Institute of Psychiatry, \\ 
        Psychology and Neuroscience, King's College London}
\address{$^{2}$ Centre for Biostatistics, Institute of Population Health, University of Manchester}
\runauthor{Cedric E. Ginestet}
\thankstext{t1}{This work was supported by an 
MRC project grant MR/K006185/1, Landau et al.~(2013-2016) entitled
``Developing methods for understanding mechanism in complex
interventions.'' We also would like to thank Stephen Burgess, Paul
Clarke, Graham Dunn, Andrew Pickles, and Ian White for useful
suggestions and discussions.}
\begin{abstract}
In the presence of confounders, the ordinary least squares (OLS) estimator is known to be biased. This problem can be remedied by using the two-stage least squares (TSLS) estimator, based on the availability of valid instrumental variables (IVs). This reduction in bias, however, is offset by an increase in variance. Under standard assumptions, the OLS has indeed a larger bias than the TSLS estimator; and moreover, one can prove that the sample variance of the OLS estimator is no greater than the one of the TSLS.  Therefore, it is natural to ask whether one could combine the desirable properties of the OLS and TSLS estimators. Such a trade-off can be achieved through a convex combination of these two estimators, thereby producing our proposed convex least squares (CLS) estimator. The relative contribution of the OLS and TSLS estimators is here chosen to minimize a sample estimate of the mean squared error (MSE) of their convex combination.  This proportion parameter is proved to be unique, whenever the OLS and TSLS differ in MSEs. Remarkably, we show that this proportion parameter can be estimated from the data, and that the resulting CLS estimator is consistent.  We also show how the CLS framework can incorporate other asymptotically unbiased estimators, such as the jackknife IV estimator (JIVE). The finite-sample properties of the CLS estimator are investigated using Monte Carlo simulations, in which we independently vary the amount of confounding and the strength of the instrument. Overall, the CLS estimator is found to outperform the TSLS estimator in terms of MSE. The method is also applied to a classic data set from econometrics, which models the financial return to education.
\end{abstract}
\begin{keyword}[class=AMS]
\kwd{Convex combination}
\kwd{Instrumental variables}
\kwd{Ordinary least squares}
\kwd{Econometrics}
\kwd{Two-stage least squares}
\end{keyword}
\arxiv{0000.0000}
\end{frontmatter}


\section{Introduction}\label{sec:intro}
Instrumental variables (IVs) estimation is one of the cornerstones of
modern econometric theory. The use of IVs has been described as ``only
second to ordinary least squares (OLS) in terms of methods used in
empirical economic research'' \citep[][p.89]{Wooldridge2002}. This
ranking of estimation techniques 
naturally leads to the following methodological questions: When should we prefer
IV estimation over OLS? Is it always preferable to use an instrument
even though this may substantially increase the variance of the
resulting estimator? 

In fields including econometrics and the social sciences, and
in some medical disciplines such as psychiatry, 
the direct randomized allocation of subjects to different experimental
conditions is rarely possible, thereby preventing such scientists from
inferring causal relations. Without adequate experimental manipulation,
the model's predictors may be correlated with the errors. When this is
case, we say that the predictors are \textit{endogenous}. The absence
of experimental manipulation in observational data, however, can be
addressed by using IVs to predict the alleged causal
variables. In particular, the resulting IV estimators allow to reduce
the bias of the estimated effect. The main difficulty in conducting
such IV analyses lies in the choice of appropriate \textit{exogenous}
instruments. Indeed, instruments are assumed to be solely correlated with the
outcome variable through the predictor. This specific assumption is
sometimes referred to as the \textit{exclusion criterion}, since it
disallows any direct effect of the instrument on the outcome. 

The first published use of IVs is commonly attributed to
\citet{Wright1928} in the context of microeconometrics, albeit this
has been historically disputed \citep[see][]{Stock2003}. This
estimation technique has been widely adopted in econometrics, and in
other social sciences, including psychology, epidemiology, public
health and political science. In particular, the use of IV methods has
now become an integral part of causal inference
\citep{Pearl2009}. The use of IVs in regression has been extended in
several directions, allowing two-sample estimation, for instance
\citep{Inoue2010}, and the selection of instruments using penalized
methods such as the LASSO \citep{Ng2009,Belloni2012}. 
More recently, these methods have become especially popular 
in the study of genetic variants, thereby demonstrating the wide
applicability of IV-based methods \citep{Palmer2012,Pierce2013}. The reader may
consult \citet{Wooldridge2002} and \citet{Cameron2005}
for an introduction to the use of instrumental variables in the context
of econometrics. A review of the assumptions underlying the use of IVs
is provided by \citet{Angrist2001}, and \citet{Heckman1997}; whereas
the finite-sample properties of IV estimators have been described by
\citet{Maddala1992} and \citet{Nelson1988}. 

While the asymptotic properties of IV estimators such as the two-stage
least squares (TSLS) are well-understood \citep{Staiger1997,Hahn2004};
in practice, it is not always clear whether or not using an IV
estimator over a simpler OLS estimator is necessarily
beneficial. Intuitively, since every IV is a
random variable, its inclusion in the analysis tends to increase the variance
of the resulting estimator. The magnitude of this increase in variance
is proportional to the correlation of the instrument with the
predictor. Poor or \textit{weak} instruments are variables that are
weakly correlated with the endogenous variables in the model. Thus,
although the use of an IV estimator is likely to lead to a significant
decrease in the bias of the OLS estimator, it will also yield a
more variable estimator. Since the true value of the parameters of
interest is unknown in practice, it is generally not possible to
evaluate whether the benefit of using a given set of instruments
outweighs the cost in variance of incorporating them into the
analysis. In addition, the use of weak instruments can also lead
to a substantial amount of finite-sample bias. Indeed, the use of weak
instruments has been studied by
\citet{Bound1995}, and these authors have shown that the inclusion of
instruments with only weak linear relationships with the endogenous
variables, tends to inflate the bias of the IV estimator; ultimately yielding
an estimator as biased as the original OLS estimator. 

In this paper, we address this issue by proposing a sample estimate
of the mean squared error (MSE) of the estimators of interest. Since
the MSE can be decomposed into a bias and a variance component, it
provides us with a natural criterion for combining the OLS and TSLS
estimators. Crucially, however, the proportion parameter weighting the relative
contributions of the two candidate estimators is adaptive, in the
sense that it depends on the properties of the data, and takes into
account the strength of the instruments. 
The idea of combining the OLS and TSLS estimators has been previously
discussed in the literature \citep{Angrist1995}. In particular, 
\citet{Sawa1973} has proposed an ``almost unbiased estimator'' for
simultaneous equations systems, which strikes a balance between two
different $k$-class estimators by weighting their relative
contributions using the sample size and the number of variables in the
model. Moreover, \citet{Angrist1995} have given an interpretation of
the limited information maximum likelihood (LIML) estimator as a
combination estimator, which relies on a weighting of the OLS and
TSLS estimators. Such combined estimators, however, do not attempt to
estimate the respective contributions of each estimator using the
data, as we have done in the paper at hand. The main contribution of
this article is therefore to provide a framework for estimating such
proportions in a data-informed adaptive manner. 

The paper is organized as follows. In section \ref{sec:cls}, we fix
the notation, and briefly recall the assumptions behind
OLS and TSLS estimation. We then show that these two estimators 
have complementary properties, in the sense that the OLS has minimal
variance, while the TSLS is asymptotically
unbiased. In section \ref{sec:cls proper}, we describe our proposed
convex estimator, and study its asymptotic properties, under the
assumption that the optimal proportion is known; whereas in section
\ref{sec:cls estimation}, we describe
a sample estimator of this proportion parameter. This framework
is then extended to other asymptotically unbiased estimators in a
third section. In section \ref{sec:sim}, these theoretical
results are tested using a range of different synthetic data sets.
The proposed methods are also applied to a classic data set from
econometrics in section \ref{sec:real}, and some conclusions are
provided in section \ref{sec:conclusion}. Finally, the proofs of all
the propositions in the paper are reported in the appendix. 

\section{Combining OLS and TSLS Estimators}\label{sec:cls}
\subsection{Ordinary Least Squares (OLS)}\label{sec:ols}
The model under scrutiny is described by the following linear relationship, 
\begin{equation}\label{eq:model}
    Y = X\bbe + \ep,
\end{equation}
where $X$ is a random row vector of order $1\times k$, and $\bbe$ is a
column vector of order $k\times 1$ representing the parameters of
interest, while $Y$ and $\ep$ are two real-valued random variables. Throughout, we will treat
both the error term, $\ep$, and the vector of predictors, $X$,
as random quantities, thereby allowing for possible correlations
between the $X_{j}$'s and $\ep$. For expediency, all random
variables, regardless of their dimensions, are simply denoted by
upper-case Roman letters. In general, a sample of
$n$ draws will be available from the model in equation
(\ref{eq:model}), such that 
\begin{equation}\notag
    y_{i} = \bx_{i}\bbe + \ep_{i}, \qquad\forall\;i=1,\ldots,n;
\end{equation}
where $\bx_{i}$ is again a row vector of order $1\times k$. This may also
be written using matrix notation as
\begin{equation}\notag
    \by = \bX\bbe + \bep,
\end{equation}
where $\by$ and $\bep$ are column vectors of order $n\times 1$, and
$\bX$ is a matrix of order $n\times k$. 

The estimation of the unknown vector of parameters, $\bbe$, can be performed by making some
standard assumptions about the moments of the different random
variables in (\ref{eq:model}), as commonly done in econometrics
\citep[see][]{Wooldridge2002}:
\begin{en}
  \item[(A1)] \textit{Exogeneity:} $\E[X\pri\ep]=\bzero$,
  \item[(A2)] \textit{Homoscedastitity:} $\E[\ep^{2}|X]=\sig^{2}$,
  \item[(A3)] \textit{Identification:} $\rank(\E[X\pri X])=k$;
\end{en}
with $\sig^{2}:=\E[\ep^{2}]$, and where $\E[X\pri X]$
represents a matrix of order $k\times k$. Under assumptions (A2) and
(A3), the OLS estimator behaves asymptotically as follows,
\begin{equation}\label{eq:ols}
     \wh\bbe_{n} := (\bX\pri\bX)^{-1}\bX\pri\by
     \stack{p}{\longrightarrow} 
     \E[X\pri X]^{-1}\E[X\pri Y] =: \wh\bbe.
\end{equation}
If assumption (A1) also holds, we say that the model in (\ref{eq:model}) is
\textit{exogenous}, and it then follows that the OLS 
estimator is asymptotically unbiased and consistent. That is, the
limit, $\wh\bbe$, can be shown to be equal to the true parameter, $\bbe$.
However, if assumption (A1) is violated, then the OLS estimator is
inconsistent. Thus, a model in which the vector of predictors has non-zero
correlations with the error term, $\ep$, is referred to as an
\textit{endogenous} model. 
\begin{figure}[t]
\centering\small
\begin{tikzpicture}
    \draw (-3,0) node[draw,inner sep=8pt](z){$Z$};
    \draw (0,0) node[draw,inner sep=8pt](x){$X$};
    \draw (+3,0) node[draw,inner sep=8pt](y){$Y$};
    \draw (+0,+1.5) node[draw,circle,inner sep=5pt](d){$\de$};
    \draw (+3,+1.5) node[draw,circle,inner sep=5pt](e){$\ep$};
    \draw (1.5,-1.75) node[draw,circle,inner sep=5pt](u){$U$};
    \draw[thick,->] (z) -- (x) node[midway,above]{$\bGa$};
    \draw[thick,->] (x) -- (y) node[midway,above]{$\bbe$};
    \draw[thick,->] (u) -- (x) node[midway,anchor=north east]{$\al$};
    \draw[thick,->] (u) -- (y) node[midway,anchor=north west]{$\al$};
    \draw[thick,->] (e) -- (y) node[midway,anchor=north west]{};
    \draw[thick,->] (d) -- (x) node[midway,anchor=north east]{};
\end{tikzpicture}
\caption{Graphical representation of the IV model described in
  equations (\ref{eq:model}) and (\ref{eq:model2}) in the presence of
  an unmeasured confounder $U$; where observed and latent variables
  are denoted by squares and circles, respectively. This graph
  corresponds to a two-level system of equations composed of
  $Y=X\bbe + U\al + \ep$, and $X=Z\bGa + U\al + \de$. When we assume
  that $\al\neq0$, condition (A1) is violated, and $X$ becomes
  endogenous.}
  \label{fig:model}
\end{figure}
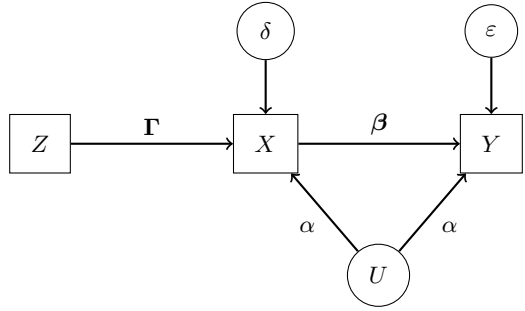

\subsection{Two-stage Least Squares (TSLS)}\label{sec:tsls}
The limitations of the OLS can be addressed by using a vector of
IVs, denoted $Z$. We will here
assume that $Z$ is a random row vector of order $1\times l$, with
$l\geq k$. This vector of instruments is used in a multivariate linear
equation of the form, 
\begin{equation}\label{eq:model2}
          X = Z\bGa + \de,
\end{equation}
where $\bGa$ is an unknown matrix of parameters of order $l\times k$,
and $X$ and $\de$ are random row vectors of order $1\times k$.
As before, we will usually work with a set of $n$ realizations from
this multivariate linear model expressed as follows,
\begin{equation}\label{eq:model2 realization}
    \bx_{i} = \bz_{i}\bGa + \bde_{i}, \qquad\forall\;i=1,\ldots,n;
\end{equation}
where $\bx_{i}$ and $\bde_{i}$ are $1\times k$ row vectors,
and $\bz_{i}$ is an $1\times l$ row vector. This can be concisely
expressed using matrix notation as
\begin{equation}\notag
    \bX = \bZ\bGa + \bD,
\end{equation}
where $\bZ$ and $\bD$ are matrices of order $n\times l$ and
$n\times k$, respectively. A graphical illustration of
the IV model is provided in figure \ref{fig:model}. 

When using the two-stage least squares (TSLS) estimator, we will make
the following additional assumptions about the random row vector of
instruments:
\begin{en}
  \item[(A4)] \textit{Exogeneity:} $\E[Z\pri\ep]=\bzero$,
  \item[(A5)] \textit{Homoscedastitity:} $\E[\ep^{2}|Z]=\sig^{2}$,
  \item[(A6)] \textit{Identification:} $\rank(\E[Z\pri Z])=l$,
    $\rank(\E[Z\pri X])=k$;
\end{en}
where, as before, $\sig^{2}:=\E[\ep^{2}]$. The TSLS estimator is then defined as 
\begin{equation}\notag
     \wti\bbe_{n} :=
     (\wh\bX{}\pri\wh\bX)^{-1}\wh\bX{}\pri\by,
\end{equation}
with $\wh\bX:=\bH_{z}\bX$ denoting the projection of the matrix of
predictors onto the column space of $\bZ$, and where
$\bH_{z}:=\bZ(\bZ\pri\bZ)^{-1}\bZ\pri$ is the hat
matrix of the multivariate regression in equation (\ref{eq:model2
  realization}). Under assumptions (A5) and (A6), the TSLS estimator
converges in probability to a non-stochastic vector, $\wti\bbe$,
defined as 
\begin{equation}\label{eq:tsls}
     \wti\bbe :=
     \big(\E[X\pri Z]\E[Z\pri Z]^{-1}\E[Z\pri X]\big)^{-1}
     \big(\E[X\pri Z]\E[Z\pri Z]^{-1}\E[Z Y]\big),
\end{equation}
such that $\wti\bbe_{n} \stack{p}{\longrightarrow}\wti\bbe$, as
described in \citet{Wooldridge2002}. Moreover, under assumption
(A4), this sequence of estimators can be shown to be asymptotically unbiased and
consistent with respect to the true vector of parameters, such that
$\wti\bbe=\bbe$. However, this gain in unbiasedness is compensated by
a larger variance of the TSLS estimator, as we discuss in the next
section. 

\subsection{Bias/Variance Trade-off}\label{sec:trade-off}
Under assumptions (A2-A6), the TSLS estimator is asymptotically
unbiased. By contrast, if assumption (A1) does not hold, then the OLS
estimator is asymptotically biased. However,
for finite $n$, the empirical variance of the TSLS estimator can be shown
to be larger than the one of the OLS estimator. We make these
observations formal by comparing the variance estimators
of the OLS and TSLS estimators. These are
\begin{equation}\label{eq:variance}
     \wh\var(\wh\bbe_{n}) :=
     \wh\sig^{2}_{n}(\bX\pri\bX)^{-1}, 
     \quad\te{and}\quad
     \wh\var(\wti\bbe_{n}) :=
     \wti\sig^{2}_{n}(\wh\bX{}\pri\wh\bX)^{-1};
\end{equation}
with the sample residual sums of squares (RSSs), $\wh\sig_{n}^{2}$ and $\wti\sig_{n}^{2}$,
being given by 
\begin{equation}\label{eq:rss}
     \wh\sig^{2}_{n} := 
     \frac{1}{n-k}\sum_{i=1}^{n}(y_{i}-\bx_{i}\wh\bbe_{n})^{2},
     \quad\te{and}\quad
     \wti\sig^{2}_{n} := 
     \frac{1}{n-k}\sum_{i=1}^{n}(y_{i}-\bx_{i}\wti\bbe_{n})^{2};
\end{equation}
for the OLS and TSLS estimators, respectively.

More remarkably, one can also approximate the bias of these two estimators. The
theoretical squared bias of a given arbitrary estimator, $\bbe^{\dag}_{n}$, is
defined as 
\begin{equation}\notag
     \bias^{2}(\bbe^{\dag}_{n}) := (\E[\bbe^{\dag}_{n}] -
     \bbe)(\E[\bbe^{\dag}_{n}] - \bbe)\pri,
\end{equation}
for every $n$. In the sequel, we will assume that the IVs under
scrutiny are valid instruments, such that assumption (A4) is true. 
Therefore, it follows that the TSLS estimator, $\wti\bbe_{n}$, is
known to be consistent, and can be used to construct a consistent
approximation of the bias of any arbitrary estimator, $\bbe_{n}^{\dag}$. For
large $n$, it follows that the squared bias of any such estimator can be
consistently estimated by 
\begin{equation}\label{eq:bias}
     \wh\bias{}^{2}(\bbe^{\dag}_{n}) := (\bbe^{\dag}_{n} -
     \wti\bbe_{n})(\bbe^{\dag}_{n} - \wti\bbe_{n})\pri.
\end{equation}
Observe that this empirical estimate of the bias gives a value of zero
for the TSLS estimator, for every $n$. This particular 
choice of empirical bias estimate can also be seen to be related to
the Hausman test, commonly used in econometrics for testing whether or
not the predictors of interest are exogenous
\citep{Hausman1978}. Indeed, the squared bias in equation
(\ref{eq:bias}) corresponds to the numerator of the Hausman test
statistic. 

Combining this empirical estimate of the bias with the standard
variance estimators in equation (\ref{eq:variance}), we can 
formalize our original observation about the trade-off between the
superiority of the TSLS estimator in terms of bias, and the
superiority of the OLS estimator in terms of variance. This result will
motivate our construction of a combined estimator, in which we will exploit the
respective strengths of the OLS and TSLS estimators, denoted by
$\wh\bbe_{n}$ and $\wti\bbe_{n}$, respectively. 
\begin{pro}\label{pro:bias-variance}
  Under assumptions (A2-A6), for every $n$, and for every realizations,
  $\by$, $\bX$, and $\bZ$, if both $\bX\pri\bX$ and $\wh\bX{}\pri\wh\bX$
  are invertible, then 
  \begin{en}
  \item[(i)] $\wh\bias{}^{2}(\wh\bbe_{n}) \succeq \wh\bias{}^{2}(\wti\bbe_{n})$, 
  \item[(ii)] $\wh\var(\wh\bbe_{n}) \preceq \wh\var(\wti\bbe_{n})$;
  \end{en}
  where $\succeq$ and $\preceq$ denote the positive semidefinite order
  for $k\times k$ matrices.
\end{pro}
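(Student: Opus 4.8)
The plan is to treat the two assertions separately, since part (i) is essentially a matter of unwinding the definition while part (ii) carries the real content. For (i), I would first note that the empirical squared-bias functional in (\ref{eq:bias}) is anchored at the TSLS estimator, so substituting $\bbe^{\dag}_{n} = \wti\bbe_{n}$ gives $\wh\bias{}^{2}(\wti\bbe_{n}) = (\wti\bbe_{n} - \wti\bbe_{n})(\wti\bbe_{n} - \wti\bbe_{n})\pri = \bzero$. On the other hand, writing $\bv := \wh\bbe_{n} - \wti\bbe_{n}$, we have $\wh\bias{}^{2}(\wh\bbe_{n}) = \bv\bv\pri$, a rank-one outer product, which is positive semidefinite for any $\bv$. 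Hence $\wh\bias{}^{2}(\wh\bbe_{n}) \succeq \bzero = \wh\bias{}^{2}(\wti\bbe_{n})$, giving (i).

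For (ii), the strategy is to split the variance estimators in (\ref{eq:variance}) into their scalar (RSS) and matrix (inverse cross-product) factors and compare each. On the matrix side, I would use $\wh\bX = \bH_{z}\bX$ together with the symmetry and idempotency of the projection $\bH_{z}$ to write $\wh\bX{}\pri\wh\bX = \bX\pri\bH_{z}\bX$, so that
\begin{equation}\notag
  \bX\pri\bX - \wh\bX{}\pri\wh\bX = \bX\pri(\bI - \bH_{z})\bX \succeq \bzero,
\end{equation}
since $\bI - \bH_{z}$ is itself an orthogonal projection and hence positive semidefinite. Because both cross-products are assumed invertible (and thus positive definite), the order-reversing property of inversion on positive definite matrices then yields $(\wh\bX{}\pri\wh\bX)^{-1} \succeq (\bX\pri\bX)^{-1}$.

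On the scalar side, the key observation is that both residual sums of squares in (\ref{eq:rss}) are formed from the original design $\bX$, and that the OLS estimator is by definition the minimizer of $\sum_{i}(y_{i} - \bx_{i}\bbe)^{2}$ over $\bbe$. Evaluating this objective at $\bbe = \wti\bbe_{n}$ can therefore only increase it, which gives $\wh\sig^{2}_{n} \leq \wti\sig^{2}_{n}$. To finish, I would chain the two comparisons: since $(\bX\pri\bX)^{-1} \succeq \bzero$ and $\wti\sig^{2}_{n} - \wh\sig^{2}_{n} \geq 0$, scaling preserves the order, and therefore
\begin{equation}\notag
  \wh\sig^{2}_{n}(\bX\pri\bX)^{-1} \preceq \wti\sig^{2}_{n}(\bX\pri\bX)^{-1} \preceq \wti\sig^{2}_{n}(\wh\bX{}\pri\wh\bX)^{-1},
\end{equation}
which is exactly $\wh\var(\wh\bbe_{n}) \preceq \wh\var(\wti\bbe_{n})$.

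The proof presents no serious obstacle; the only points requiring a little care are the two elementary facts about the positive semidefinite order that I invoke—inversion reverses the order on positive definite matrices, and multiplication by a nonnegative scalar preserves it—together with the bookkeeping needed to chain a scalar inequality with a matrix inequality into the single two-sided comparison above. The conceptual crux is simply recognizing that $\wh\sig^{2}_{n}$ is a genuine minimum of the residual sum of squares over $\bbe$, which delivers the scalar comparison for free.
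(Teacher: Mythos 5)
Your proposal is correct and follows essentially the same route as the paper's own proof: part (i) from the definition of the empirical bias (zero for TSLS, a positive semidefinite outer product for OLS), and part (ii) by combining the matrix inequality $\bX\pri\bX \succeq \wh\bX{}\pri\wh\bX$ (your $\bX\pri(\bI-\bH_{z})\bX\succeq\bzero$ is literally the same matrix the paper writes as the Gram matrix $(\bX-\wh\bX)\pri(\bX-\wh\bX)$) with the scalar inequality $\wh\sig^{2}_{n}\leq\wti\sig^{2}_{n}$ from the optimality of OLS. If anything, you are slightly more explicit than the paper in spelling out the order-reversing property of inversion on positive definite matrices and the chaining of the scalar and matrix comparisons.
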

Note that, in proposition \ref{pro:bias-variance}, we have requested
both $\bX\pri\bX$ and $\wh\bX{}\pri\wh\bX$ to be invertible. Indeed,
while assumptions (A3) and (A6) ensures that the stochastic limits
of these two matrices are invertible, this does not guarantee that
these matrices will be invertible for every $n$. 
Although these inequalities appear to be well-known, they
do not appear to have been formally proved in standard texts on
instrumental variables \citep[see, for
instance][]{Wooldridge2002,Davidson1993,Cameron2005}. A full proof of
this result is therefore provided in the appendix. 

Furthermore, the two statements in proposition
\ref{pro:bias-variance} can also be shown to hold in the stochastic
limit, as described in the following corollary. Note that this corollary is
trivially true for the variances of the OLS and TSLS estimators, since
both of these quantities converge to a zero matrix. A proof of this
result is provided in the appendix. 
\begin{cor}\label{cor:bias-variance}
  Under assumptions (A2-A6),
  \begin{en}
  \item[(i)] $\plim_{n}\wh\bias{}^{2}(\wh\bbe_{n}) \succeq \plim_{n}\wh\bias{}^{2}(\wti\bbe_{n})$, 
  \item[(ii)] $\plim_{n}\wh\var(\wh\bbe_{n}) \preceq \plim_{n}\wh\var(\wti\bbe_{n})$;
  \end{en}
  where, as before, $\wh\bbe_{n}$ and $\wti\bbe_{n}$ denote the OLS
  and TSLS estimators, respectively. 
\end{cor}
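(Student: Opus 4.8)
The plan is to obtain the corollary directly from the in-probability limits of the two estimators, rather than by passing Proposition \ref{pro:bias-variance} to the limit term by term. The two ingredients I would rely on are the convergences $\wh\bbe_n \stack{p}{\longrightarrow}\wh\bbe$ (valid under (A2)--(A3), equation (\ref{eq:ols})) and $\wti\bbe_n \stack{p}{\longrightarrow}\wti\bbe$ (valid under (A5)--(A6), equation (\ref{eq:tsls})), together with the continuous mapping theorem and Slutsky's theorem. Since the positive semidefinite cone is closed, membership in it is preserved under probability limits, which is what lets me transfer the ordering statements to the limit.

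For statement (i), I would first note that, by the definition in (\ref{eq:bias}), the empirical squared bias of the TSLS estimator vanishes identically, so $\plim_n\wh\bias{}^{2}(\wti\bbe_n)=\bzero$. For the OLS estimator, I would use that the difference of two in-probability-convergent sequences converges to the difference, $\wh\bbe_n-\wti\bbe_n\stack{p}{\longrightarrow}\wh\bbe-\wti\bbe$, and then apply the continuous mapping theorem to the continuous map $\bv\mapsto\bv\bv\pri$, obtaining
\begin{equation}\notag
    \plim_n\wh\bias{}^{2}(\wh\bbe_n) = (\wh\bbe-\wti\bbe)(\wh\bbe-\wti\bbe)\pri.
\end{equation}
As an outer product, the right-hand side is positive semidefinite and hence dominates $\bzero=\plim_n\wh\bias{}^{2}(\wti\bbe_n)$ in the $\succeq$ order, which is precisely (i).

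For statement (ii), the key point is the $n^{-1}$ scaling of both variance estimators. I would write $(\bX\pri\bX)^{-1}=n^{-1}\big(n^{-1}\bX\pri\bX\big)^{-1}$ and invoke the law of large numbers, $n^{-1}\bX\pri\bX\stack{p}{\longrightarrow}\E[X\pri X]$ (invertible by (A3)), together with $\wh\sig^{2}_{n}\stack{p}{\longrightarrow}\sig^{2}$ and Slutsky's theorem, to conclude $\plim_n\wh\var(\wh\bbe_n)=\bzero$. The same argument applied to $\wh\bX{}\pri\wh\bX=\bX\pri\bH_{z}\bX$, using the invertibility guaranteed by (A6) and $\wti\sig^{2}_{n}\stack{p}{\longrightarrow}\sig^{2}$, gives $\plim_n\wh\var(\wti\bbe_n)=\bzero$. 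Both limits therefore equal the zero matrix, and $\bzero\preceq\bzero$ holds trivially.

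The argument is essentially routine once the two convergence statements are in hand; the only points I would take care over are confirming that the relevant probability limits exist before comparing them in the $\succeq$ ordering, and justifying the $n^{-1}$ rescaling in (ii) through Slutsky's theorem. I do not anticipate a substantive obstacle, since (i) collapses to the positive semidefiniteness of an outer product and (ii) to the fact that both variance estimators vanish in the limit.
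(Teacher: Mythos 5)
Your proposal is correct and follows essentially the same route as the paper: part (i) reduces to the identical vanishing of the TSLS empirical bias together with the positive semidefiniteness of the limiting outer product $(\wh\bbe-\wti\bbe)(\wh\bbe-\wti\bbe)\pri$ (the paper phrases this as preservation of positive semidefiniteness in the limit), and part (ii) reduces to both variance estimators converging to the zero matrix. The only cosmetic imprecision is that under endogeneity $\wh\sig^{2}_{n}$ converges to $\E[(Y-X\wh\bbe)^{2}]$ rather than to $\sig^{2}$, but since this limit is finite the $n^{-1}$ scaling still forces $\wh\var(\wh\bbe_{n})\stack{p}{\longrightarrow}\bzero$, so the conclusion is unaffected.
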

The inequalities in proposition \ref{pro:bias-variance} indicate that
it may be fruitful to compare the MSEs of these two estimators for finite
$n$. Clearly, since the bias tends to dominate the MSE asymptotically, it follows that the
TSLS should exhibit a smaller level of bias as $n$ goes to infinity. Nonetheless,
for finite samples, the OLS may yield a smaller MSE than its two-stage
counterpart, due to its greater efficiency. Therefore, one may try to
strike a balance between the relative strengths of these two types of
estimators, using the sample MSE as a criterion. 

\subsection{Convex Least Squares (CLS)}\label{sec:cls proper}
In this section and in the rest of this paper, we now assume that (A2-A6)
hold. In addition, we also assume that the random vectors,
$\wh\bbe_{n}$ and $\wti\bbe_{n}$, are well-behaved, in the sense that
they are elementwise squared-integrable for every $n$.
Under these assumptions, we propose an estimator, denoted $\bar\bbe_{n}(\pi)$, which is
defined as a convex combination of the OLS and TSLS estimators, such that
\begin{equation}\label{eq:cls}
    \bar\bbe_{n}(\pi) := \pi\wh\bbe_{n} + (1-\pi)\wti\bbe_{n},
\end{equation}
for every $\pi\in[0,1]$. The \textit{proportion parameter}, $\pi$, controls the
respective contributions of the OLS and TSLS estimators. This 
parameter is selected in order to minimize the trace of the
theoretical MSE of the corresponding CLS estimator,
\begin{equation}\notag
    \mse(\bar\bbe_{n}(\pi)) 
         = \E\!\big[\!\big.(\bar\bbe_{n}(\pi)-\bbe)
                   (\bar\bbe_{n}(\pi)-\bbe)\pri\big], 
\end{equation}
where $\bbe\in\R^{k}$ is the true parameter of interest and the MSE is
a $k\times k$ matrix. 

The MSE automatically strikes a trade-off between the unbiasedness of
the TSLS estimator and the efficiency of the OLS estimator. Indeed,
this criterion can be decomposed into a variance and a bias component,
such that 
\begin{equation}\notag
    \mse(\bar\bbe_{n}(\pi)) 
    = \var(\bar\bbe_{n}(\pi)) + \bias^{2}(\bar\bbe_{n}(\pi)).
\end{equation}
Therefore, in the light of proposition \ref{pro:bias-variance}, this
criterion constitutes a natural choice for combining these two types of
estimators.

The MSE of the CLS estimator, $\mse(\pi\wh\bbe_{n} +
(1-\pi)\wti\bbe_{n})$, can be expressed as the weighted sum of the
MSEs of the OLS and TSLS estimators, as well as a
\textit{cross-squared-error} (CSE) term between these two estimators, 
\begin{equation}\label{eq:mse pi}
    \pi^{2}\mse(\wh\bbe_{n}) 
    + 2\pi(1-\pi)\cse(\wh\bbe_{n},\wti\bbe_{n})
    + (1-\pi)^{2}\mse(\wti\bbe_{n}),
\end{equation}
where the cross-term is defined as follows, 
\begin{equation}\notag
    \cse(\wh\bbe_{n},\wti\bbe_{n})
    := \E[(\wh\bbe_{n} - \bbe)(\wti\bbe_{n} - \bbe)\pri].
\end{equation}
By analogy with the MSE, we can also decompose the CSE 
into a covariance term and a squared \textit{cross-bias} term, denoted
$\bias^{2}(\wh\bbe_{n},\wti\bbe_{n})$, such that 
\begin{equation}\notag
     \cse(\wh\bbe_{n},\wti\bbe_{n})
     =  \cov(\wh\bbe_{n},\wti\bbe_{n})
     + \bias^{2}(\wh\bbe_{n},\wti\bbe_{n}),
\end{equation}
where the squared cross-bias term is 
$\bias^{2}(\wh\bbe_{n},\wti\bbe_{n}):=(\E[\wh\bbe_{n}] -
\bbe)(\E[\wti\bbe_{n}] - \bbe)\pri$.

The true (or theoretical) proportion parameter, $\pi$, is defined as the value that minimizes
the trace of the theoretical MSE of the CLS estimator. Note that we
are here considering a sequence of parameters, $\pi_{n}$, since this
definition may yield a different proportion for different sample
sizes. Therefore, for every $n$, the target proportion parameter
is given by
\begin{equation}\label{eq:pi}
    \pi_{n} := \argmin_{\pi\in[0,1]} \tr\mse(\bar\bbe_{n}(\pi)).
\end{equation}
Crucially, this parameter is available in closed-form, and it can also be
shown to be unique, since the trace of the
theoretical MSE of $\bar\bbe_{n}$ is a convex
function of $\pi$. This statement is made formal in the following
proposition, which is proved using the aforementioned
decomposition of the MSE of the CLS estimator. The proportion
parameter is only non-unique when the square-root of the trace of the
MSEs the OLS and TSLS estimators are identical. This quantity, denoted by
$(\tr\mse(\bbe^{\dag}_{n}))^{1/2}$ for every estimator
$\bbe_{n}^{\dag}$, will be referred to as the RMSE of
$\bbe_{n}^{\dag}$, in the sequel. See appendix \ref{sec:proof} for a
proof of this minimization. 
\begin{pro}\label{pro:pi}
   For every $n$, the proportion parameter 
   defined in equation (\ref{eq:pi}) is given by
   \begin{equation}\notag
        \pi_{n} = 
         \frac{\tr(\mse(\wti\bbe_{n}) -
         \cse(\wh\bbe_{n},\wti\bbe_{n}))}{\tr(\mse(\wti\bbe_{n}) -
         2\cse(\wh\bbe_{n},\wti\bbe_{n}) + \mse(\wh\bbe_{n}))}. 
   \end{equation}
   It is unique whenever the RMSEs of the OLS and TSLS estimators are
   not equal. 
\end{pro}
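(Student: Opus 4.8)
The plan is to reduce the minimization (\ref{eq:pi}) to a one–dimensional convex quadratic problem using the expansion (\ref{eq:mse pi}), which is already in hand. First I would introduce the scalar shorthands $a:=\tr\mse(\wh\bbe_n)$, $b:=\tr\cse(\wh\bbe_n,\wti\bbe_n)$ and $c:=\tr\mse(\wti\bbe_n)$, so that taking the trace in (\ref{eq:mse pi}) turns the objective into
\begin{equation}\notag
   f(\pi):=\tr\mse(\bar\bbe_{n}(\pi))=a\pi^{2}+2b\pi(1-\pi)+c(1-\pi)^{2}=(a-2b+c)\pi^{2}-2(c-b)\pi+c .
\end{equation}
Writing $d:=a-2b+c$, a direct differentiation gives $f'(\pi)=2d\pi-2(c-b)$, whose stationary point is $\pi_{n}=(c-b)/d$; substituting back the definitions of $a,b,c$ recovers the displayed formula, since the numerator is $\tr(\mse(\wti\bbe_{n})-\cse(\wh\bbe_{n},\wti\bbe_{n}))$ and the denominator is exactly $d$.

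To see that this stationary point is in fact the minimizer, I would check that $f$ is convex, i.e. that the leading coefficient $d$ is nonnegative. The key observation is that $d$ is itself the trace of an MSE: writing $U:=\wh\bbe_{n}-\bbe$ and $V:=\wti\bbe_{n}-\bbe$, one has $a=\E\|U\|^{2}$, $c=\E\|V\|^{2}$ and $b=\tr\E[UV\pri]=\E\langle U,V\rangle$, so that
\begin{equation}\notag
   d=\E\|U\|^{2}-2\,\E\langle U,V\rangle+\E\|V\|^{2}=\E\|U-V\|^{2}=\E\|\wh\bbe_{n}-\wti\bbe_{n}\|^{2}\ge 0 .
\end{equation}
Here the elementwise square-integrability assumed just before the proposition is what makes all three scalars finite and the interchange of trace and expectation legitimate, and convexity of $f$ then follows immediately.

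For the uniqueness clause I would apply the Cauchy--Schwarz inequality in the $L^{2}$ inner product $\langle U,V\rangle_{L^{2}}=\E\langle U,V\rangle$, giving $b\le\sqrt{ac}$ and hence
\begin{equation}\notag
   d=a+c-2b\ \ge\ a+c-2\sqrt{ac}=(\sqrt a-\sqrt c)^{2}=\big(\rmse(\wh\bbe_{n})-\rmse(\wti\bbe_{n})\big)^{2}.
\end{equation}
Thus whenever the two RMSEs differ we have $d>0$, so $f$ is strictly convex and the stationary point $\pi_{n}$ is its unique minimizer; when the RMSEs coincide, $d$ may vanish and $f$ degenerates to an affine (or constant) function, so the minimizer need no longer be unique, exactly as claimed.

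The main obstacle I anticipate is not the stationarity computation, which is routine, but confirming that the unconstrained critical point actually lies in $[0,1]$, so that it coincides with the constrained $\argmin$ over $[0,1]$ required by (\ref{eq:pi}). This amounts to verifying $b\le\min(a,c)$, which is stronger than the Cauchy--Schwarz bound $b\le\sqrt{ac}$ and must instead be extracted from the bias/variance structure recorded in proposition \ref{pro:bias-variance}; the boundary cases $\pi_{n}\in\{0,1\}$ should then be recorded separately.
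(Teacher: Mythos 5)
Your proof is correct and follows essentially the same route as the paper's: expand the trace of the MSE as a quadratic in $\pi$ via (\ref{eq:mse pi}), solve the first-order condition to get $\pi_{n}=(c-b)/d$, and establish convexity and uniqueness by viewing $a$, $b$, $c$ as squared norms and an inner product in $L^{2}$ and applying Cauchy--Schwarz, with equality exactly when the two RMSEs coincide. The obstacle you flag at the end --- that the unconstrained stationary point need not lie in $[0,1]$ unless $b\le\min(a,c)$, which does not follow from Cauchy--Schwarz alone --- is a genuine issue, but the paper's own proof passes over it silently in exactly the same way, so you have identified a gap in the published argument rather than introduced one of your own.
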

Finally, we can verify that the CLS estimator based on the true
proportion $\pi_{n}$ has an MSE, which is lower or equal to the
MSEs of the OLS and TSLS estimators. Note that this inequality is not
immediate from our definition of $\pi_{n}$, as we need to control for
the additional CSE term in equation \ref{eq:mse pi}. A proof of this
proposition is also provided in the appendix. 
\begin{pro}\label{pro:mse optimal}
  The CLS estimator based on the true proportion, $\pi_{n}$, satisfies
  \begin{equation}\notag
      \tr\mse(\bar\bbe_{n}(\pi_{n}))\leq
            \tr\min\lb\mse(\wh\bbe_{n}),\mse(\wti\bbe_{n})\rb,
  \end{equation}
  for every $n$.
\end{pro}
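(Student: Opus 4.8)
The plan is to reduce the matrix statement to a one-dimensional convexity argument. Writing $f(\pi) := \tr\mse(\bar\bbe_n(\pi))$ and expanding the decomposition (\ref{eq:mse pi}) in powers of $\pi$, I would record $f$ as the scalar quadratic
\[
  f(\pi) = A\pi^2 + 2\big(\tr\cse(\wh\bbe_n,\wti\bbe_n) - \tr\mse(\wti\bbe_n)\big)\pi + \tr\mse(\wti\bbe_n),
\]
with leading coefficient $A := \tr\mse(\wh\bbe_n) - 2\tr\cse(\wh\bbe_n,\wti\bbe_n) + \tr\mse(\wti\bbe_n)$. The key structural fact is that the two endpoints recover the pure estimators: $f(0) = \tr\mse(\wti\bbe_n)$ and $f(1) = \tr\mse(\wh\bbe_n)$. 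Hence the assertion is exactly $f(\pi_n) \le \min\{f(0), f(1)\}$, and the whole difficulty is to guarantee that the cross term, which enters only through $A$, cannot push $f(\pi_n)$ above both endpoints.

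The crucial step, the one the remark preceding the proposition warns about, is therefore to show $A \ge 0$, i.e. that $f$ is convex. Here I would invoke the identity $\tr\E[ww\pri] = \E[\|w\|^2]$, valid because $\wh\bbe_n$ and $\wti\bbe_n$ are assumed elementwise squared-integrable so that trace and expectation may be interchanged. Applying it to $w = \wh\bbe_n - \bbe$ and $w = \wti\bbe_n - \bbe$, and using the analogous inner-product identity $\tr\cse(\wh\bbe_n,\wti\bbe_n) = \E[(\wh\bbe_n - \bbe)\pri(\wti\bbe_n - \bbe)]$ for the cross term, collapses $A$ into a single expected square,
\[
  A = \E\big[\big\|(\wh\bbe_n - \bbe) - (\wti\bbe_n - \bbe)\big\|^2\big] = \E\big[\big\|\wh\bbe_n - \wti\bbe_n\big\|^2\big] \ge 0.
\]
This is precisely the quantity whose nonnegativity already underlies the convexity used in the proof of Proposition \ref{pro:pi}, so it can be quoted from there.

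With $A \ge 0$ in hand, $f$ is convex and $\pi_n$ from Proposition \ref{pro:pi} is its global minimizer, so the vertex identity $f(\pi_0) - f(\pi_n) = A(\pi_0 - \pi_n)^2 \ge 0$ holds for every real $\pi_0$. Taking $\pi_0 = 1$ gives $f(\pi_n) \le \tr\mse(\wh\bbe_n)$ and taking $\pi_0 = 0$ gives $f(\pi_n) \le \tr\mse(\wti\bbe_n)$; together these yield the claimed bound $\tr\mse(\bar\bbe_n(\pi_n)) \le \tr\min\lb\mse(\wh\bbe_n),\mse(\wti\bbe_n)\rb$. I expect the only genuine obstacle to be the degenerate boundary case $A = 0$, where the closed form for $\pi_n$ is undefined: there $A = 0$ forces $\wh\bbe_n = \wti\bbe_n$ almost surely, so $f$ is constant and the inequality holds with equality. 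Apart from this boundary bookkeeping, the argument is entirely routine once convexity is established.
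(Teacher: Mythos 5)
Your argument is correct, but it reaches the conclusion by a different mechanism than the paper. The paper's proof never expands $f(\pi):=\tr\mse(\bar\bbe_{n}(\pi))$ as a quadratic in $\pi$ at all: it works pointwise, noting that $\bx\mapsto\bx\pri\bx$ is a convex function on $\R^{k}$ and that $\bar\bbe_{n}(\pi)-\bbe=\pi(\wh\bbe_{n}-\bbe)+(1-\pi)(\wti\bbe_{n}-\bbe)$ for each realization, so that taking expectations yields the chord inequality $\tr\mse(\bar\bbe_{n}(\pi))\leq\pi\tr\mse(\wh\bbe_{n})+(1-\pi)\tr\mse(\wti\bbe_{n})$, after which minimizing over $\pi\in[0,1]$ and observing that the chord's minimum over $[0,1]$ is attained at an endpoint finishes the proof; the CSE term is absorbed implicitly by Jensen and never appears. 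Your route instead makes the cross term explicit, identifies the leading coefficient as $A=\tr\mse(\wh\bbe_{n})-2\tr\cse(\wh\bbe_{n},\wti\bbe_{n})+\tr\mse(\wti\bbe_{n})=\E[\|\wh\bbe_{n}-\wti\bbe_{n}\|^{2}]\geq0$, and uses the vertex form of the parabola. What your version buys is an exact identity, $f(\pi_{0})-f(\pi_{n})=A(\pi_{0}-\pi_{n})^{2}$, quantifying the gain over either endpoint, and a cleaner expression for $A$ than the Cauchy--Schwarz bound the paper uses in the second-derivative test of Proposition \ref{pro:pi}; it also correctly handles the degenerate case $A=0$, where the closed form for $\pi_{n}$ breaks down, which the paper glosses over. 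What the paper's version buys is independence from the quadratic structure: the pointwise-convexity argument transfers verbatim to the bootstrap setting of Corollary \ref{cor:sample optimal} and to any convex loss, whereas your expansion is tied to the exact second-order form of the MSE. Both proofs are valid; the only caveat common to both is the unexamined possibility that the closed-form $\pi_{n}$ of Proposition \ref{pro:pi} falls outside $[0,1]$, which affects neither conclusion since your vertex identity holds for all real $\pi_{0}$.
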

Observe that this result holds in greater generality, since
the OLS and TSLS estimators could be replaced by other candidate
estimators. In the next section, we describe how to estimate the
proportion parameter in an adaptive manner for this particular choice
of estimators; whereas in section \ref{sec:other}, we consider how the
CLS can accommodate other estimators. 

\subsection{CLS Estimation}\label{sec:cls estimation}
When evaluating $\pi_{n}$ from a particular data set, we estimate this
parameter by minimizing the trace of an empirical estimate of the
theoretical MSE of the CLS estimator. A consistent estimator of the MSE
can be obtained by setting the true parameter, $\bbe$, to be equal to the TSLS
estimator, $\wti\bbe_{n}$. Thus, for every $\pi\in[0,1]$, our proposed
empirical MSE is given by
\begin{equation}\label{eq:empirical mse}
    \wh\mse(\bar\bbe_{n}(\pi)) 
    = \wh\var(\bar\bbe_{n}(\pi)) + \wh\bias{}^{2}(\bar\bbe_{n}(\pi)),
\end{equation}
where $\wh\bias(\bar\bbe_{n}(\pi)):=\bar\bbe_{n}(\pi) - \wti\bbe_{n}$. 
That is, we here use the TSLS estimator as a consistent estimator of the true
parameter, $\bbe$. To approximate the population variance of the CLS
estimator, we can use a combination of the empirical estimates of the
variances of the two estimators of interest, such that 
\begin{equation}\label{eq:cls variance}
    \wh\var(\bar\bbe_{n}(\pi)) 
      = \pi^{2}\wh\var(\wh\bbe_{n}) 
      + 2\pi(1-\pi)\wh\cov(\wh\bbe_{n},\wti\bbe_{n})
      + (1-\pi)^{2}\wh\var(\wti\bbe_{n}),
\end{equation}
where the empirical variances of the OLS and TSLS estimators have
already be given in equation (\ref{eq:variance}); and where the
covariance term takes the following form, 
\begin{equation}\notag
     \wh\cov(\wh\bbe_{n},\wti\bbe_{n}) 
     := \bar\sig^{2}_{n}(\bX\pri\bX)^{-1}(\bX\pri\wh\bX)(\wh\bX{}\pri\wh\bX)^{-1}
      = \bar\sig^{2}_{n}(\bX\pri\bX)^{-1};
\end{equation}
with as before, $\wh\bX:=\bH_{z}\bX$, and in which the second equality is
obtained by using the idempotency of $\bH_{z}$. Moreover, the cross-RSS, denoted
$\bar\sig^{2}_{n}$, is given by 
\begin{equation}\notag
     \bar\sig^{2}_{n} := \frac{1}{n-k}\sum_{i=1}^{n}
        (y_{i} - \bx_{i}\wh\bbe_{n})(y_{i} - \bx_{i}\wti\bbe_{n}),
\end{equation}
which can be compared to the RSSs of the OLS and TSLS estimators in
equation (\ref{eq:rss}). 

The second term in equation (\ref{eq:empirical mse}) consists of the
empirical bias of the CLS estimator. As for the empirical variance,
the bias can be estimated by using the TSLS estimator to replace the
unknown true parameter, such that 
\begin{equation}\notag
    \wh\bias{}^{2}(\bar\bbe_{n}(\pi)) = 
    \pi^{2}\wh\bias{}^{2}(\wh\bbe_{n}) 
      + 2\pi(1-\pi)\wh\bias{}^{2}(\wh\bbe_{n},\wti\bbe_{n})
      + (1-\pi)^{2}\wh\bias{}^{2}(\wti\bbe_{n}),
\end{equation}
where the empirical biases of the OLS and TSLS estimators are
estimated as in equation (\ref{eq:empirical mse}). Since we have set
the bias of $\wti\bbe_{n}$ to zero, it follows that the cross-bias
term also eliminates, Thus, the empirical bias of the CLS estimator
becomes proportional to the one of the OLS estimator, such that we obtain
\begin{equation}\label{eq:cls bias}
    \wh\bias{}^{2}(\bar\bbe_{n}(\pi)) = 
    \pi^{2}\wh\bias{}^{2}(\wh\bbe_{n}).
\end{equation}
The empirical estimate of the MSE in equation (\ref{eq:empirical mse})
can be shown to be consistent, as described in the
following proposition, which is proved in the appendix. Observe that
this statement holds for every arbitrary proportion comprised between
$0$ and $1$.
\begin{pro}\label{pro:mse}
   For every $\pi$, $\bar\bbe_{n}(\pi)\stack{p}{\longrightarrow}
   \bar\bbe(\pi):=\pi\wh\bbe + (1-\pi)\wti\bbe$, where $\wh\bbe$ and
   $\wti\bbe$ are defined as in equations (\ref{eq:ols}) and
   (\ref{eq:tsls}), respectively. Moreover, 
   \begin{equation}\notag
      \wh\mse(\bar\bbe_{n}(\pi))\stack{p}{\longrightarrow}
      \mse(\bar\bbe(\pi)).
   \end{equation}   
\end{pro}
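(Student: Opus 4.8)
The plan is to prove the two assertions in sequence: the first is essentially immediate, and the second reduces, after the variance/bias decomposition of \eqref{eq:empirical mse}, to showing that the empirical variance vanishes while the empirical bias converges to the squared bias of the (degenerate) limit. First I would establish the convergence of $\bar\bbe_n(\pi)$ itself. Since $\wh\bbe_n\stack{p}{\longrightarrow}\wh\bbe$ and $\wti\bbe_n\stack{p}{\longrightarrow}\wti\bbe$ by \eqref{eq:ols} and \eqref{eq:tsls}, and since for fixed $\pi$ the map $(a,b)\mapsto\pi a+(1-\pi)b$ is continuous, the continuous mapping theorem (equivalently Slutsky's theorem) yields
\[
    \bar\bbe_n(\pi)=\pi\wh\bbe_n+(1-\pi)\wti\bbe_n
    \stack{p}{\longrightarrow}\pi\wh\bbe+(1-\pi)\wti\bbe=\bar\bbe(\pi).
\]
I would then identify the target $\mse(\bar\bbe(\pi))$: because $\bar\bbe(\pi)$ is a non-stochastic vector, its variance is the zero matrix, so its MSE equals $(\bar\bbe(\pi)-\bbe)(\bar\bbe(\pi)-\bbe)\pri$. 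Under (A4) the TSLS limit satisfies $\wti\bbe=\bbe$, whence $\bar\bbe(\pi)-\bbe=\pi(\wh\bbe-\bbe)$ and therefore $\mse(\bar\bbe(\pi))=\pi^{2}(\wh\bbe-\bbe)(\wh\bbe-\bbe)\pri$.

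For the variance part, I would treat each of the three summands in \eqref{eq:cls variance} separately, writing every empirical (co)variance as a scalar RSS factor times $\tfrac1n\,(\tfrac1n\bX\pri\bX)^{-1}$ (and analogously with $\wh\bX{}\pri\wh\bX$). By a law of large numbers $\tfrac1n\bX\pri\bX\stack{p}{\longrightarrow}\E[X\pri X]$ and $\tfrac1n\wh\bX{}\pri\wh\bX\stack{p}{\longrightarrow}\E[X\pri Z]\E[Z\pri Z]^{-1}\E[Z\pri X]$, both invertible by (A3) and (A6); continuity of matrix inversion on the invertible cone then sends the inverted factors to finite limits, while the scalar factors $\wh\sig^2_n$, $\wti\sig^2_n$, $\bar\sig^2_n$ converge to the corresponding (cross-)residual variances and are in particular $O_p(1)$. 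Multiplying by the $1/n$ prefactor makes each summand $o_p(1)$, so $\wh\var(\bar\bbe_n(\pi))\stack{p}{\longrightarrow}0$.

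For the bias part, equation \eqref{eq:cls bias} gives $\wh\bias{}^2(\bar\bbe_n(\pi))=\pi^2(\wh\bbe_n-\wti\bbe_n)(\wh\bbe_n-\wti\bbe_n)\pri$. Applying the continuous mapping theorem to the joint convergence established above, this tends to $\pi^2(\wh\bbe-\wti\bbe)(\wh\bbe-\wti\bbe)\pri=\pi^2(\wh\bbe-\bbe)(\wh\bbe-\bbe)\pri=\mse(\bar\bbe(\pi))$, using $\wti\bbe=\bbe$ once more. Combining the vanishing variance with this limit via Slutsky's theorem gives $\wh\mse(\bar\bbe_n(\pi))\stack{p}{\longrightarrow}\mse(\bar\bbe(\pi))$, completing the proof.

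I expect the main obstacle to be the groundwork for the variance part, namely verifying the consistency (or at least boundedness in probability) of the RSS-type scalars $\wh\sig^2_n$, $\wti\sig^2_n$ and the cross-RSS $\bar\sig^2_n$, together with the law of large numbers for the two Gram matrices and the continuity of inversion needed to conclude that the inverted Gram factors stay bounded. Once these are in hand, and once $\wti\bbe=\bbe$ is invoked to align the empirical bias with $\mse(\bar\bbe(\pi))$, everything else follows routinely from the continuous mapping and Slutsky theorems.
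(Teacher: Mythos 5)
Your proposal is correct and follows essentially the same route as the paper: convergence of $\bar\bbe_{n}(\pi)$ via the continuous mapping theorem, then the decomposition of $\wh\mse(\bar\bbe_{n}(\pi))$ into variance and bias from equation (\ref{eq:empirical mse}), with the variance terms vanishing and the bias term $\pi^{2}\wh\bias{}^{2}(\wh\bbe_{n})$ converging to $\pi^{2}(\wh\bbe-\bbe)(\wh\bbe-\bbe)\pri=\mse(\bar\bbe(\pi))$ via consistency of the TSLS estimator. You are merely more explicit than the paper in two places it leaves implicit, namely the law-of-large-numbers argument showing the empirical (co)variances are $o_{p}(1)$ and the identification of the degenerate limit $\mse(\bar\bbe(\pi))$ as $\pi^{2}(\wh\bbe-\bbe)(\wh\bbe-\bbe)\pri$ under (A4).
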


As for the true proportion parameter, $\pi_{n}$, which minimizes the trace
of the theoretical MSE, the proportion estimator, $\wh\pi_{n}$, which
minimizes the trace of the empirical MSE; is also 
available in closed-form. We thus obtain the following result, as a
corollary to proposition \ref{pro:pi}. Observe that, since the bias of
the TSLS estimator is zero under our estimation framework, it follows
that the MSE of the TSLS reduces to the variance of that estimator,
and that the CSE term reduces to the covariance of the two estimators
of interest. 
\begin{cor}\label{cor:pi}
   The estimator of the proportion parameter, $\pi\in[0,1]$, defined as
   $\wh\pi_{n}:=\argmin\tr\wh\mse(\bar\bbe_{n}(\pi))$, in which
   the $\wh\mse$ is defined as in equation (\ref{eq:empirical mse});
   satisfies, 
   \begin{equation}\notag
        \wh\pi_{n} = \frac{\tr(\wh\var(\wti\bbe_{n}) -
         \wh\cov(\wh\bbe_{n},\wti\bbe_{n}))}{\tr(\wh\var(\wti\bbe_{n}) -
         2\wh\cov(\wh\bbe_{n},\wti\bbe_{n}) + \wh\mse(\wh\bbe_{n}))}.     
   \end{equation}
\end{cor}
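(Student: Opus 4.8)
The plan is to reduce the minimisation to the same one-variable calculus already carried out for Proposition \ref{pro:pi}, the only difference being that every population quantity is now replaced by its sample counterpart. First I would substitute the empirical variance (\ref{eq:cls variance}) and the empirical squared bias (\ref{eq:cls bias}) into (\ref{eq:empirical mse}). Collecting the two $\pi^{2}$ contributions $\pi^{2}\wh\var(\wh\bbe_{n})$ and $\pi^{2}\wh\bias{}^{2}(\wh\bbe_{n})$ into $\pi^{2}\wh\mse(\wh\bbe_{n})$, and using that the empirical bias of the TSLS vanishes, $\wh\bias(\wti\bbe_{n})=\wti\bbe_{n}-\wti\bbe_{n}=\bzero$ --- so that $\wh\mse(\wti\bbe_{n})=\wh\var(\wti\bbe_{n})$ and the cross-bias term drops out --- this yields
\begin{equation}\notag
   \wh\mse(\bar\bbe_{n}(\pi)) = \pi^{2}\wh\mse(\wh\bbe_{n}) + 2\pi(1-\pi)\wh\cov(\wh\bbe_{n},\wti\bbe_{n}) + (1-\pi)^{2}\wh\var(\wti\bbe_{n}).
\end{equation}
This is term-for-term the empirical analogue of the theoretical decomposition (\ref{eq:mse pi}), under the identifications $\mse(\wh\bbe_{n})\mapsto\wh\mse(\wh\bbe_{n})$, $\cse(\wh\bbe_{n},\wti\bbe_{n})\mapsto\wh\cov(\wh\bbe_{n},\wti\bbe_{n})$, and $\mse(\wti\bbe_{n})\mapsto\wh\var(\wti\bbe_{n})$.

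Next I would take the trace and minimise over $\pi$. Writing $A:=\tr\wh\mse(\wh\bbe_{n})$, $C:=\tr\wh\cov(\wh\bbe_{n},\wti\bbe_{n})$ and $B:=\tr\wh\var(\wti\bbe_{n})$, linearity of the trace gives the scalar objective $f(\pi)=\pi^{2}A+2\pi(1-\pi)C+(1-\pi)^{2}B$, whose derivative is $f'(\pi)=2\pi(A-2C+B)+2(C-B)$. Setting $f'(\pi)=0$ produces the stationary point $\pi=(B-C)/(A-2C+B)$; re-expanding $A$, $B$, $C$ as traces and applying linearity once more recovers exactly the claimed expression for $\wh\pi_{n}$. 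No computation beyond this is needed, since the algebra is identical to that of Proposition \ref{pro:pi} after the substitutions above.

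The step requiring the most care --- and the main obstacle --- is to confirm that this stationary point is genuinely the minimiser of $f$ on $[0,1]$, rather than a maximiser or a boundary point. Note that the denominator appearing in the formula, $\tr(\wh\var(\wti\bbe_{n})-2\wh\cov(\wh\bbe_{n},\wti\bbe_{n})+\wh\mse(\wh\bbe_{n}))=A-2C+B$, equals $\tfrac{1}{2}f''(\pi)$, so its strict positivity is precisely the statement that $f$ is strictly convex and hence that the critical point is the unique global minimiser. This positivity is the empirical counterpart of the convexity constant $\tr\,\E[(\wh\bbe_{n}-\wti\bbe_{n})(\wh\bbe_{n}-\wti\bbe_{n})\pri]=\E\|\wh\bbe_{n}-\wti\bbe_{n}\|^{2}\geq 0$ used in Proposition \ref{pro:pi}, and it plays the role there taken by the hypothesis that the RMSEs of the OLS and TSLS estimators differ; I would verify it from the explicit plug-in forms of $\wh\var$, $\wh\cov$ and $\wh\mse$ in (\ref{eq:variance})--(\ref{eq:cls bias}). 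Granting strict convexity, it only remains to observe, exactly as for $\pi_{n}$ in Proposition \ref{pro:pi}, that the interior critical point lies in $[0,1]$, whence it coincides with the constrained argmin and the corollary follows.
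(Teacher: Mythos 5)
Your proposal is correct and follows essentially the same route as the paper, whose own proof of this corollary simply invokes the differentiation argument from Proposition \ref{pro:pi} after substituting the empirical quantities and using that $\wh\bias(\wti\bbe_{n})=\bzero$ forces $\wh\mse(\wti\bbe_{n})=\wh\var(\wti\bbe_{n})$ and kills the cross-bias term. You are in fact somewhat more careful than the paper, which never verifies that the empirical denominator $A-2C+B$ is positive or that the stationary point lies in $[0,1]$ (the theoretical Cauchy--Schwarz argument of Proposition \ref{pro:pi} does not transfer automatically to the plug-in variance formulas), but since the corollary only asserts the closed-form expression, this does not affect the claim.
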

We have here emphasized the estimation of the proportion parameter. 
Our original motivation, however, for constructing the CLS estimator
centered on producing an estimator, which would minimize the MSE. This
can be achieved by estimating the CLS estimator, $\bar\bbe_{n}(\pi)$, at the value
of the estimated proportion, $\wh\pi_{n}$, thereby producing $\bar\bbe_{n}(\wh\pi_{n})$.
We thus conclude this section by verifying that this particular CLS estimator
behaves as expected asymptotically, in the sense that it is both
weakly and MSE consistent. 
\begin{pro}\label{pro:cls consistency}
  Under assumptions (A2-A6), the CLS estimator, $\bar\bbe_{n}(\wh\pi_{n})$, satisfies 
  (i) $\bar\bbe_{n}(\wh\pi_{n})\stack{p}{\longrightarrow}\bbe$, and (ii)
  $\bar\bbe_{n}(\wh\pi_{n})\stack{L^{2}}{\longrightarrow}\bbe$.
\end{pro}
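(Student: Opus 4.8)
The plan is to prove both statements by first rewriting the CLS estimator in the convenient form
\begin{equation}\notag
   \bar\bbe_{n}(\wh\pi_{n}) - \bbe
   = (\wti\bbe_{n} - \bbe) + \wh\pi_{n}(\wh\bbe_{n} - \wti\bbe_{n}),
\end{equation}
which follows at once from (\ref{eq:cls}) by regrouping, since $\bar\bbe_{n}(\pi)=\wti\bbe_{n}+\pi(\wh\bbe_{n}-\wti\bbe_{n})$. Writing $A_{n}:=\wti\bbe_{n}-\bbe$ and $B_{n}:=\wh\bbe_{n}-\wti\bbe_{n}$, everything reduces to controlling the two summands $A_{n}$ and $\wh\pi_{n}B_{n}$, using throughout that $\wh\pi_{n}\in[0,1]$ is bounded.

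For the weak consistency (i), the first summand satisfies $A_{n}\stack{p}{\longrightarrow}\bzero$ by the consistency of the TSLS estimator under (A4-A6). For the second summand I would analyse the closed form of $\wh\pi_{n}$ from corollary \ref{cor:pi}. By proposition \ref{pro:mse}, each of $\wh\var(\wti\bbe_{n})$, $\wh\cov(\wh\bbe_{n},\wti\bbe_{n})$ and $\wh\var(\wh\bbe_{n})$ converges in probability to $\bzero$, while $\wh\bias{}^{2}(\wh\bbe_{n})=B_{n}B_{n}\pri\stack{p}{\longrightarrow}(\wh\bbe-\bbe)(\wh\bbe-\bbe)\pri$, so that the denominator of $\wh\pi_{n}$ converges in probability to $d:=\|\wh\bbe-\bbe\|^{2}$ while its numerator converges to $0$. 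When $d>0$ (the endogenous case) the continuous mapping theorem gives $\wh\pi_{n}\stack{p}{\longrightarrow}0$; since $B_{n}\stack{p}{\longrightarrow}\wh\bbe-\bbe$ is bounded in probability, Slutsky's theorem yields $\wh\pi_{n}B_{n}\stack{p}{\longrightarrow}\bzero$. When $d=0$ (the exogenous case $\wh\bbe=\bbe$) the continuous mapping argument degenerates, but then $B_{n}\stack{p}{\longrightarrow}\bzero$ directly, and boundedness of $\wh\pi_{n}$ again forces $\wh\pi_{n}B_{n}\stack{p}{\longrightarrow}\bzero$. In either case both summands vanish in probability, establishing (i).

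For the $L^{2}$ consistency (ii), the convergence in probability from (i) must be upgraded, and this is where the real work lies. I would argue via uniform integrability. The bound $\wh\pi_{n}\in[0,1]$ furnishes the deterministic domination
\begin{equation}\notag
   \|\bar\bbe_{n}(\wh\pi_{n})-\bbe\|^{2}
   \leq 2\|A_{n}\|^{2} + 2\wh\pi_{n}^{2}\|B_{n}\|^{2}
   \leq 2\|A_{n}\|^{2} + 2\|B_{n}\|^{2}.
\end{equation}
Here $\E\|A_{n}\|^{2}=\tr\mse(\wti\bbe_{n})\to0$ by the asymptotic unbiasedness and vanishing variance of the TSLS estimator, while the $L^{2}$-convergence of the OLS and TSLS estimators to $\wh\bbe$ and $\bbe$ respectively (a consequence of squared-integrability together with vanishing variances) gives both $\|B_{n}\|^{2}\stack{p}{\longrightarrow}d$ and $\E\|B_{n}\|^{2}\to d$. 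Since a nonnegative sequence that converges in probability and whose means converge to the mean of its limit is uniformly integrable, $\{\|A_{n}\|^{2}\}$ and $\{\|B_{n}\|^{2}\}$ are each uniformly integrable, hence so is their sum; by the displayed domination the sequence $\{\|\bar\bbe_{n}(\wh\pi_{n})-\bbe\|^{2}\}$ is then uniformly integrable, and combining this with the convergence in probability of (i) gives $\E\|\bar\bbe_{n}(\wh\pi_{n})-\bbe\|^{2}\to0$, which is statement (ii).

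The main obstacle is precisely this $L^{2}$ step: unlike in (i), the OLS contribution does not vanish of its own accord, since its squared bias persists in the limit, and it is suppressed only through the factor $\wh\pi_{n}^{2}$, which tends to zero. The delicate point is to ensure that this suppression survives the passage to expectations, which is exactly what the uniform-integrability argument secures; the degenerate exogenous case $\wh\bbe=\bbe$, where the closed-form limit of $\wh\pi_{n}$ is indeterminate, is a secondary subtlety handled separately as above.
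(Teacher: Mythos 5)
Your proof is correct but follows a genuinely different route from the paper's. The paper proves (ii) first and deduces (i) from it: it runs a sandwich argument on the criterion functions, defining $T_{n}(p):=\tr\wh\mse(\bar\bbe_{n}(p))-\tr\mse(\bar\bbe_{n}(p))$, squeezing the gap between the empirical and theoretical minima between $T_{n}(\wh\pi_{n})$ and $T_{n}(\pi_{n})$, and then bounding the minimized theoretical MSE by $\tr\min\lb\mse(\wh\bbe_{n}),\mse(\wti\bbe_{n})\rb$ via proposition \ref{pro:mse optimal}, which vanishes because the TSLS is MSE consistent; weak consistency then follows as a corollary. You instead work directly on the estimator: you identify the probability limit of $\wh\pi_{n}$ itself (zero in the endogenous case $\wh\bbe\neq\bbe$, with the degenerate exogenous case handled by the boundedness $\wh\pi_{n}\in[0,1]$), which gives (i) first, and then upgrade to (ii) through uniform integrability based on the domination $\|\bar\bbe_{n}(\wh\pi_{n})-\bbe\|^{2}\leq 2\|A_{n}\|^{2}+2\|B_{n}\|^{2}$. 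Your route buys more information --- the asymptotic behavior of $\wh\pi_{n}$, which the paper never establishes --- and it arguably tightens the paper's own final step, since convergence in probability of the minimized empirical criterion does not by itself control the true MSE $\E\|\bar\bbe_{n}(\wh\pi_{n})-\bbe\|^{2}$ of the data-dependent estimator; your domination-plus-UI argument supplies exactly that control. Both proofs lean equally on the assumed $L^{2}$ convergence of the OLS and TSLS estimators to $\wh\bbe$ and $\bbe$; the paper's sandwich argument is shorter and transfers immediately to other pairs of constituent estimators, whereas yours exploits the closed form of $\wh\pi_{n}$ from corollary \ref{cor:pi} and so is specific to the OLS/TSLS combination.
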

The proof of this proposition follows from the inequalities reported
in proposition \ref{pro:mse optimal}, combined with the fact that the
TSLS estimator is both weakly and MSE consistent. See appendix
\ref{sec:proof} for details. 
Observe that the CLS framework relies on the existence of the first
two moments of the TSLS estimator. For finite $n$, \citet{Kinal1980}
has shown that the TSLS estimator only possesses first and second moments when
$l\geq k+2$. Asymptotically, however, such moments always exist. As
for the TSLS therefore, we are thus considering an estimator, which is
solely asymptotically well-identified. This particular issue is
further discussed in section \ref{sec:conclusion}.

\subsection{Bootstrap CLS Variance}\label{sec:bootstrap variance}
We now turn to the question of estimating the variance of our proposed
CLS estimator. In equation (\ref{eq:cls variance}), we have described the
variance of $\bar\bbe_{n}(\pi)$, for every $\pi$. This quantity was
then used in our proposed empirical MSE, in order to obtain a sample
estimate of $\pi$. However, the variance formula in equation
(\ref{eq:cls variance}) does not take into account the variability
associated with the choice of $\pi_{n}$. The derivation of a
closed-form estimator for the variance of $\bar\bbe_{n}(\wh\pi_{n})$
is beyond the scope of this paper. However, in practice, the variance of the
CLS estimator can be computed using the bootstrap by sampling with
replacement from the triple $(\by,\bX,\bZ)$, and producing $B$ bootstrap
samples denoted $(\by\as_{b},\bX\as_{b},\bZ\as_{b})$, with
$b=1,\ldots,B$. We are here adopting the framework described by previous
researchers, who have also used the bootstrap in the context of
IV estimation \citep[see for example][]{Wong1996}.  

Specifically, each bootstrap sample is constructed by sampling $n$ cases with
replacement from the collection of triples $(y_{i},\bx_{i},\bz_{i})$, with
$i=1,\ldots,n$. These bootstrap samples are then used to produce the
bootstrap distribution of the CLS estimator.
That is, for each of these bootstrap samples, we compute the
CLS estimator, $\bar\bbe_{nb}:=\bar\bbe_{nb}(\wh\pi_{nb})$, which
leads to the following bootstrap variance estimator, 
\begin{equation}\notag
     \wh\var{}\as(\bar\bbe_{n}) := 
        \frac{1}{B-1}\sum_{b=1}^{B}(\bar\bbe_{nb}\as-\E\as[\bar\bbe_{n}])
                                 (\bar\bbe_{nb}\as-\E\as[\bar\bbe_{n}])\pri,
\end{equation}
where $\E\as[\bar\bbe_{n}]:=\sum_{b}\bar\bbe_{nb}\as/B$ denotes the
bootstrap mean. In our real-world data set application in section
\ref{sec:real}, we will report the variance of the CLS and its
confidence interval using the bootstrap. 

One of the limitations of our discussion thus far is the presence of a
finite-sample bias in the TSLS estimator. In the next section, we consider
other consistent estimators, which could be articulated within our
framework by being substituted to the TSLS estimator. Indeed, every
asymptotically unbiased estimator could be used to replace the TSLS
estimator in the previous results. 

\section{Extensions to Other Unbiased Estimators}\label{sec:other}
Our proposed convex combination of least squares estimators essentially
relies on the choice of an asymptotically unbiased estimator. Under
standard assumptions on the properties of the IVs under scrutiny,
the TSLS estimator satisfies this criterion. This choice was
mainly motivated by computational
considerations. The empirical variance for the TSLS estimator is
indeed well-known and can easily be manipulated. We now extend 
the CLS framework, in order to accommodate other asymptotically
unbiased estimators. The corresponding MSE can be empirically
estimated using the bootstrap at a greater computational cost, but
without additional theoretical complications. The resulting estimator
will thus be referred to as the bootstrap CLS. 

\subsection{Jackknife IV Estimator}\label{sec:jive}
An ideal replacement for the TSLS estimator is the jackknife IV
estimator (JIVE), which we now describe. This estimator was originally
introduced by \citet{Angrist1995} in order to reduce the finite-sample
bias of the TSLS estimator, when applied to a large number of
instruments. Indeed, the TSLS estimator tends to behave poorly as the
number of instruments increases. We briefly outline this method in the present
section. See \citet{Angrist1999} for an exhaustive description. 
Let the estimator of the regression parameter in the
first-level equation in model (\ref{eq:model2}) be denoted by 
\begin{equation}\notag
     \wh\bGa := (\bZ\pri\bZ)^{-1}(\bZ\pri\bX),
\end{equation}
which is of order $l\times k$. The matrix of predictors, $\bX$, projected
onto the column space of the instruments is then given by
$\wh\bX=\bZ\wh\bGa$. The jackknife IV estimator (JIVE) proceeds by
estimating each row of $\wh\bX$ without using the corresponding data
point. That is, the $i\tth$ row in the jackknife matrix, $\wh\bX_{J}$,
is estimated without using the $i\tth$ row of $\bX$. 

This is conducted as follows. For every $i=1,\ldots,n$, we first compute
\begin{equation}\notag
     \wh\bGa_{(i)} := (\bZ_{(i)}\pri\bZ_{(i)})^{-1}(\bZ_{(i)}\pri\bX_{(i)}),
\end{equation}
where $\bZ_{(i)}$ and $\bX_{(i)}$ denote matrices $\bZ$ and $\bX$ after
removal of the $i\tth$ row, such that these two matrices are of order
$(n-1)\times l$ and $(n-1)\times k$, respectively. Then, the
matrix $\wh\bX_{J}$ is constructed by stacking these jackknife
estimates of $\wh\bGa$, after they have been pre-multiplied by the
corresponding rows of $\bZ$, 
\begin{equation}\notag
     \wh\bX_{J} :=
     \begin{vmatrix}
          \bz_{1}\wh\bGa_{(1)} \\ \vdots \\ \bz_{n}\wh\bGa_{(n)}
     \end{vmatrix},
\end{equation}
where each $\bz_{i}$ is an $l$-dimensional row vector. The JIVE
estimator is then obtained by replacing $\wh\bX$ with $\wh\bX_{J}$ in
the standard formula of the TSLS, such that 
\begin{equation}\notag
      \wti\bbe_{J} := (\wh\bX_{J}{}\pri\bX)^{-1}(\wh\bX_{J}{}\pri\by). 
\end{equation}
In this paper, we have additionally made use of the computational
formula suggested by \citet{Angrist1999}, in which each row of
$\wh\bX_{J}$ is calculated using 
\begin{equation}\notag
     \bz_{i}\wh\bGa_{(i)} = \frac{\bz_{i}\wh\bGa -
       h_{i}\bx_{i}}{1-h_{i}},
\end{equation}
where $\bz_{i}\wh\bGa_{(i)}$, $\bz_{i}\wh\bGa$ and $\bx_{i}$ are
$k$-dimensional row vectors; and with $h_{i}$ denoting
the leverage of the corresponding data point in the first-level
equation of our model, such that each $h_{i}$ is defined as
$\bz_{i}(\bZ\pri\bZ)^{-1}\bz_{i}\pri$.

\subsection{Bootstrap CLS Estimation}\label{sec:bootstrap}
When replacing the TSLS estimator with an arbitrary estimator, such as
the JIVE, some of the quantities required for estimating the
proportion, $\pi_{n}$, need not be available in closed-form. 
However, such quantities can be straightforwardly estimated using the
bootstrap, as was done for the variance of the CLS estimator in
section \ref{sec:bootstrap variance}.

We can indeed approximate the unknown joint distribution, $F(Y,X,Z)$, with its
bootstrap estimate, $F\as$, using the straightforward sampling scheme
described in section \ref{sec:bootstrap variance}. As before, we thus generate
$B$ bootstrap samples, denoted $(\by\as_{b},\bX\as_{b},\bZ\as_{b})$,
from $F\as$. These bootstrap samples are then used to produce the
bootstrap distributions of the OLS estimator and the unbiased
estimator of interest such as the JIVE; and the 
first and second moments of these estimators are computed.
Thus, for every unbiased estimator, $\bbe^{\dag}_{n}$, and given
the OLS estimator, $\wh\bbe_{n}$, we construct a bootstrap estimate of
the MSE of the corresponding CLS estimator, such that for every $\pi$,
we define 
\begin{equation}\notag
    \wh\mse{}\as(\bar\bbe_{n}(\pi))
    := \E{}\as\Big[(\bar\bbe_{n}(\pi) - \E{}\as[\bbe^{\dag}_{n}])
                  (\bar\bbe_{n}(\pi) - \E{}\as[\bbe^{\dag}_{n}])\pri\Big].
\end{equation}
As in section \ref{sec:bootstrap variance}, the operator, $\E{}\as$,
denotes the expectation over the bootstrap estimate of $F$. Similarly
to the MSE decomposition in equation (\ref{eq:mse pi}), the bootstrap
estimate of the MSE can be decomposed into the following components, 
\begin{equation}\notag
    \pi^{2}\wh\mse{}\as(\wh\bbe_{n}) 
    + 2\pi(1-\pi)\wh\cse{}\as(\wh\bbe_{n},\bbe^{\dag}_{n})
    + (1-\pi)^{2}\wh\var{}\as(\bbe^{\dag}_{n}),
\end{equation}
where the bootstrap estimate of the MSE of $\bbe^{\dag}_{n}$ was reduced to
$\wh\var{}\as(\bbe^{\dag}_{n})$, since the estimator, $\bbe^{\dag}_{n}$, is assumed to
be unbiased for every $n$. Moreover, as in equation (\ref{eq:cls bias}),
the bootstrap estimate of the bias of the CLS estimator is
proportional to the bootstrap bias of the OLS estimator, such that we
have 
\begin{equation}\notag
     \wh\bias{}\as(\bar\bbe_{n}(\pi))
      = \pi^{2}\wh\bias{}\as(\wh\bbe_{n}).
\end{equation}

The boostrap estimate of the proportion, denoted $\wh\pi\as_{n}$, is
then given by a formula analogous to the one described in corollary \ref{cor:pi},
in which each empirical moment is replaced by its bootstrap
equivalent. This allows us to show that, for every choice of
asymptotically unbiased estimator, $\bbe^{\dag}_{n}$, the resulting
bootstrap CLS estimator, $\bar\bbe_{n}(\wh\pi\as_{n})$, achieves minimal bootstrap
MSE amongst its constituent estimators. A proof of this corollary is
provided in the appendix. It relies on the same arguments employed in the
proof of the optimality of the CLS estimator in proposition \ref{pro:mse optimal}.
\begin{cor}\label{cor:sample optimal}
  For every asymptotically unbiased estimator $\bbe^{\dag}_{n}$, the
  bootstrap CLS estimator of $\wh\bbe_{n}$ and $\bbe^{\dag}_{n}$, based on the
  bootstrap proportion, $\wh\pi_{n}\as$, satisfies
  \begin{equation}\notag
      \tr\wh\mse{}\as(\bar\bbe_{n}(\wh\pi_{n}\as))\leq
      \tr\min\lb\wh\mse{}\as(\wh\bbe_{n}),\wh\mse{}\as(\bbe_{n}^{\dag})\rb,
  \end{equation}
  for every $n$.      
\end{cor}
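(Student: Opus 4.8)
The plan is to transcribe, essentially verbatim, the proof of Proposition \ref{pro:mse optimal}, replacing every theoretical moment by its bootstrap counterpart and the TSLS estimator by the arbitrary asymptotically unbiased estimator $\bbe^{\dag}_n$; no asymptotics enter, so the whole argument is carried out for a fixed $n$ and a fixed realized sample (all bootstrap expectations being conditional on the data). The object to study is the scalar map
\[
    g(\pi) := \tr\wh\mse{}\as(\bar\bbe_n(\pi)),
    \qquad \pi\in[0,1],
\]
where $\bar\bbe_n(\pi) = \pi\wh\bbe_n + (1-\pi)\bbe^{\dag}_n$ and $\wh\pi_n\as$ is, by definition, the minimizer of $g$ over $[0,1]$. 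I would first record that $g$ is quadratic in $\pi$: taking the trace of the matrix decomposition displayed immediately before the corollary gives $g(\pi) = a\pi^2 + 2b\pi(1-\pi) + c(1-\pi)^2$ with $a := \tr\wh\mse{}\as(\wh\bbe_n)$, $b := \tr\wh\cse{}\as(\wh\bbe_n,\bbe^{\dag}_n)$, and $c := \tr\wh\var{}\as(\bbe^{\dag}_n)$, i.e. $g(\pi) = (a-2b+c)\pi^2 + 2(b-c)\pi + c$.

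Next I would verify the two structural facts that license the corner comparison. First, the endpoints reproduce the two constituent bootstrap MSEs: $g(1) = a = \tr\wh\mse{}\as(\wh\bbe_n)$, while $g(0) = c = \tr\wh\var{}\as(\bbe^{\dag}_n) = \tr\wh\mse{}\as(\bbe^{\dag}_n)$, the last equality using that both estimators are centred at the bootstrap surrogate $\E\as[\bbe^{\dag}_n]$, so that the bootstrap bias of $\bbe^{\dag}_n$ vanishes and its MSE collapses to its variance. Second, $g$ is convex, because its leading coefficient reassembles into a single bootstrap second moment,
\[
    a - 2b + c = \tr\E\as\!\big[(\wh\bbe_n\as - \bbe_n^{\dag\ast})(\wh\bbe_n\as - \bbe_n^{\dag\ast})\pri\big]
    = \E\as\big\|\wh\bbe_n\as - \bbe_n^{\dag\ast}\big\|^2 \geq 0,
\]
where $\wh\bbe_n\as$ and $\bbe_n^{\dag\ast}$ denote the two estimators recomputed on a generic bootstrap replicate; nonnegativity is immediate since this is the trace of a positive semidefinite matrix.

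Finally, since $g$ is convex and $\wh\pi_n\as$ minimizes it over the closed interval $[0,1]$, and both corners $0$ and $1$ lie in $[0,1]$, we obtain $g(\wh\pi_n\as)\leq g(0)$ and $g(\wh\pi_n\as)\leq g(1)$ simultaneously, hence $g(\wh\pi_n\as)\leq\min\{g(0),g(1)\}$, which is exactly $\tr\wh\mse{}\as(\bar\bbe_n(\wh\pi_n\as))\leq\tr\min\lb\wh\mse{}\as(\wh\bbe_n),\wh\mse{}\as(\bbe_n^{\dag})\rb$. The argument itself is routine; the only point requiring genuine care — the analogue of the subtlety in Proposition \ref{pro:mse optimal} — is that $\wh\pi_n\as$ must be read as the minimizer \emph{constrained} to $[0,1]$, not the unconstrained vertex supplied by the bootstrap analogue of Corollary \ref{cor:pi}. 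Convexity is what resolves this: clipping that vertex into $[0,1]$ returns the constrained argmin, so corner domination holds even when the vertex falls outside $[0,1]$, and the degenerate case $a-2b+c=0$, in which $g$ is affine, is covered by the same endpoint comparison.
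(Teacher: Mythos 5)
Your proof is correct and follows essentially the same route as the paper's: both arguments rest on the convexity of the bootstrap MSE as a function of $\pi$ (the paper derives the chord bound $\tr\wh\mse{}\as(\bar\bbe_{n}(\pi))\leq \pi\,\tr\wh\mse{}\as(\wh\bbe_{n})+(1-\pi)\,\tr\wh\mse{}\as(\bbe^{\dag}_{n})$ by applying convexity of $\bx\mapsto\bx\pri\bx$ inside the bootstrap expectation, which for a quadratic is equivalent to your direct check that the leading coefficient $a-2b+c=\E\as\|\wh\bbe_{n}\as-\bbe_{n}^{\dag\ast}\|^{2}\geq 0$), followed by the observation that the minimizer over $[0,1]$ dominates the two endpoints. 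Your explicit reconciliation of the closed-form vertex with the constrained argmin is a point the paper leaves implicit, but it does not alter the substance of the argument.
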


\section{Data Simulations}\label{sec:sim}
We here produce synthetic data sets with different number of
instruments. All of the models considered in this section are based on
a univariate endogenous variable, $X$, without any additional
covariate in the second-level equation. In Model I, we describe a
simple Gaussian model with a single valid instrument; whereas in Model
II, we consider a similar statistical model comprising $l=10$
uncorrelated instruments. 

\subsection{Simulation Model I}\label{sec:sim model i}
    Synthetic data sets were created from the following
    two-level model. We are here focusing on a univariate model composed of 
    a single predictor, $X$, and a single instrument, $Z$. For every
    $i=1,\ldots,n$, the two levels of the model are
    \begin{equation}\label{eq:model i}
      \begin{aligned}
        y_{i} &= x_{i}\be + u_{i}\al + \ep_{i}, \\
        x_{i} &= z_{i}\ga + u_{i}\al + \de_{i};
      \end{aligned}
    \end{equation}
    where $\al$ controls the degree of endogeneity of $X$, and $\ga$
    controls the amount of covariance between $X$ and the instrument
    $Z$, such that $\ga$ can be interpreted as the strength of the
    instrument. We wish to keep the marginal variances of the
    $Y_{i}$'s and $X_{i}$'s constant, while varying the values of
    $\al$ and $\ga$. This is achieved by defining the variances of the
    error terms, $\ep_{i}$ and $\de_{i}$, as functions of $\al$ and $\ga$.
    In doing so, we simplify the interpretation of
    $\be$, which becomes a standardized regression coefficient, whenever $\ga=0$. Throughout
    these simulations, the true parameter of interest will be set to
    be $\be=1/2$. A graphical representation of this model has been given in
    figure \ref{fig:model}.

    The model is thus standardized by setting the marginal variances
    of the $Y_{i}$'s and $X_{i}$'s to one, such that $\var(Y_{i})=\var(X_{i})=1$;
    and by generating the $Z_{i}$'s and $U_{i}$'s from a standard
    normal distribution, such that 
    \begin{equation}\notag
        Z_{i},U_{i} \stack{\iid}{\sim} N(0,1),\qquad \for
        i=1,\ldots,n. 
    \end{equation}
    The marginal variances of these random variables will be
    denoted by $\sig_{z}^{2}:=\var[Z_{i}]$ and
    $\sig^{2}_{u}:=\var[U_{i}]$, respectively. 
    The two remaining variances can then be defined as functions of
    the different regression parameters. For the second-level equation,
    we have 
    \begin{equation}\label{eq:sigep}
        \ep_{i} \stack{\iid}{\sim} N(0,\sig^{2}_{\ep}(\al)),\qquad 
        \sig^{2}_{\ep}(\al) := 3/4 - 2\al^{2},
    \end{equation}
    which follows from the constraint $\var[Y_{i}]=1$, and from the
    decomposition,
    \begin{equation}\notag
        \var(Y_{i}) = \be^{2}\var(X_{i}) + \al^{2}\var(U_{i}) +
        2\be\al\cov(X_{i},U_{i}) + \var(\ep_{i}).
    \end{equation}
    Using the linear independence of $Z$ and $U$, the covariance term becomes
    $\cov(X_{i},U_{i})=\al\sig^{2}_{u}$. Moreover, from our choice of
    variances for $X_{i}$ and $U_{i}$, we also obtain
    $\var(Y_{i})=\be^{2} + \al^{2} + 2\be\al^{2}+\sig^{2}_{\ep}$. 
    Fixing the variance of $Y_{i}$ to unity and using our choice of $\be$,
    this yields the definition of $\sig^{2}_{\ep}(\al)$ given in
    equation (\ref{eq:sigep}). Moreover, observe that the positiveness of
    $\sig^{2}_{\ep}$ produces an upper bound for $\al$, which is
    given by $\al<\sqrt{3/8}$. 

    Similarly, we can ensure that the marginal variances of the
    $X_{i}$'s are also constant, irrespective of the choice of $\al$
    and $\ga$, by controlling the variances of the $\de_{i}$'s. Thus,
    we set
    \begin{equation}\notag
        \de_{i} \stack{\iid}{\sim} N(0,\sig^{2}_{\de}(\al,\ga)),\qquad 
        \sig^{2}_{\de}(\al,\ga) := 1 - (\ga^{2}+\al^{2}). 
    \end{equation}
    This specification ensures that the variances
    of the $X_{i}$'s are constant with $\sig^{2}_{x}=1$. 
    That is, since by assumption $\cov(Z_{i},U_{i})=0$, and
    using the fact that for uncorrelated variables, the Bienaym\'e
    formula states that $\var(\sum_{j}X_{j}) = \sum_{j}\var(X_{j})$;
    it then follows that for every $i=1,\ldots,n$, we obtain
    \begin{equation}\notag
        \var(X_{i}) = \ga^{2}\var(Z_{i}) +
        \al^{2}\var(U_{i}) + \var(\de_{i}),
    \end{equation}
    which gives, $\var(\de_{i}) = 1 - \ga^{2}\sig^{2}_{z} -
    \al^{2}\sig^{2}_{u}$, as required. Moreover, note that we must
    have $\ga<\sqrt{1-\al^{2}}$ in order to ensure that
    $\sig^{2}_{\de}>0$. Using our previous bound for $\al$, which
    states that $\al<\sqrt{3/8}$, it then follows that $\ga<\sqrt{5/8}$. 
    
    Altogether, we have therefore fixed the variances of the
    $Y_{i}$'s, $X_{i}$'s, $U_{i}$'s, and $Z_{i}$'s to unity; and by
    assumption, the instrument is deemed valid in the sense that
    $\cov(Z_{i},U_{i})=0$. From these
    standardizations, it follows that for every
    $\al\in[0,\sqrt{3/8})$, and for every
    $\ga\in(0,\sqrt{1-\al^{2}})$, the correlations of the $X_{i}$'s
    with the $U_{i}$'s and $Z_{i}$'s are controlled by the two
    simulation parameters, $\al$ and $\ga$:
    \begin{equation}\notag
        \Cor(X_{i},U_{i}) = \al, 
        \qquad\te{and}\qquad 
        \Cor(X_{i},Z_{i}) = \ga,
    \end{equation}
    which respectively represent the \textit{magnitude of the confounding} and the
    \textit{strength of the instrument}. In addition, the correlations of the
    $Y_{i}$'s with the $U_{i}$'s and the $Z_{i}$'s are also controlled
    by a combination of these parameters. These correlations are
    respectively given by $\Cor(Y_{i},U_{i})=\be\al+\al$, and
    $\Cor(Y_{i},Z_{i})=\be\ga$. Finally, the correlation between the
    outcome and the endogenous variable satisfies 
    \begin{equation}\notag
        \Cor(Y_{i},X_{i}) = \be + \al^{2}.
    \end{equation}
    Therefore, in the absence of any confounding effect, $\be$ can be
    interpreted as the correlation coefficient between the $Y_{i}$'s
    and the $X_{i}$'s. 
\begin{figure}[t]
   \centering
   \includegraphics[width=13cm]{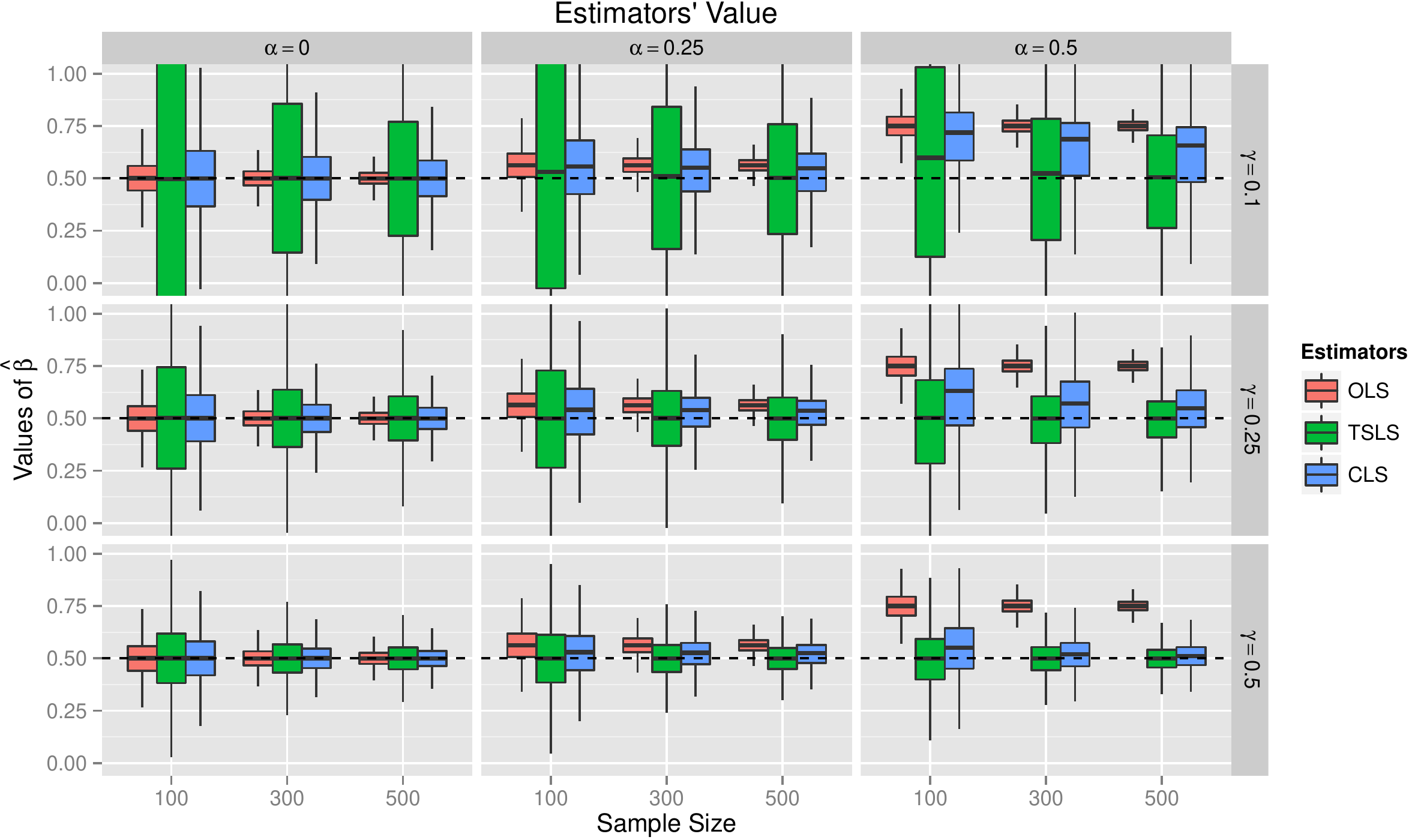}\\
   \caption{Monte Carlo distributions of the estimators' values under
     three different levels of confounding, $\al=\Cor(X_{i},U_{i})$;
     and for three different levels of instrument's strength,
     $\ga=\Cor(X_{i},Z_{i})$. In each
     panel, the sample size varies between $n=100$ and $n=500$.
     We here compare the OLS, TSLS and CLS estimators with respect to
     the true parameter $\be=1/2$, whose value is indicated by a
     dashed line. These simulations are based on $10^{5}$
     iterations for each scenario. The boxplots are here
     centered at the median, and the upper and lower hinges correspond
     to the first and third quartiles. 
     \label{fig:beta}}
\end{figure}
\begin{figure}[t]
  \centering
  \includegraphics[width=13cm]{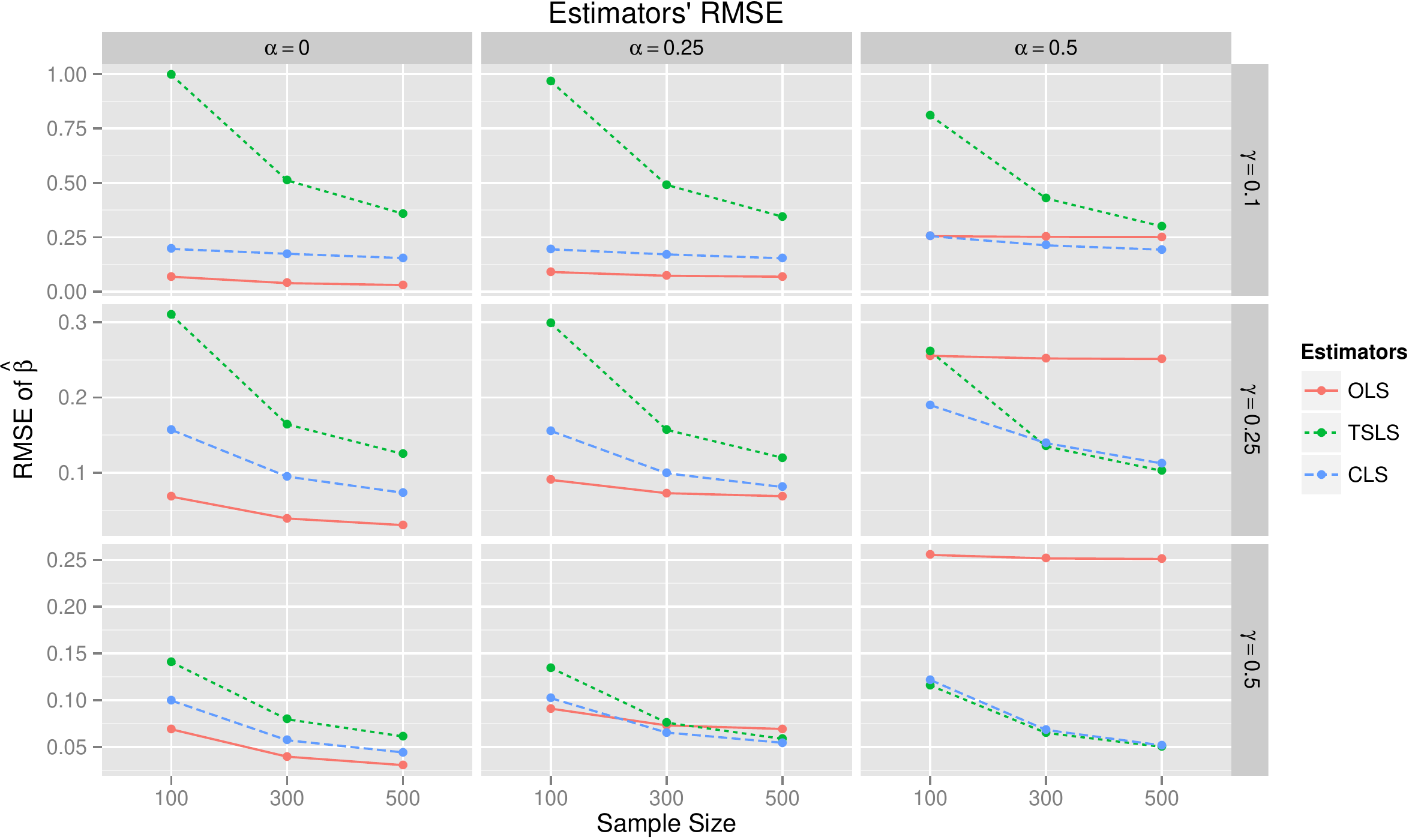}
   \caption{Monte Carlo estimates of the root mean squared errors (RMSEs)
     of the three estimators of interest under the simulation scenarios described
     in figure \ref{fig:beta}. As predicted, the RMSE of the proposed CLS method strikes a
     trade-off between its two constituent estimators. Indeed, under
     small $\al$, the CLS's RMSE tends towards the RMSE of the OLS estimator;
     whereas under large $\ga$, it tends towards the RMSE of the TSLS estimator.
     \label{fig:mse}}
\end{figure}
\begin{figure}[t]
  \centering
  \includegraphics[width=13cm]{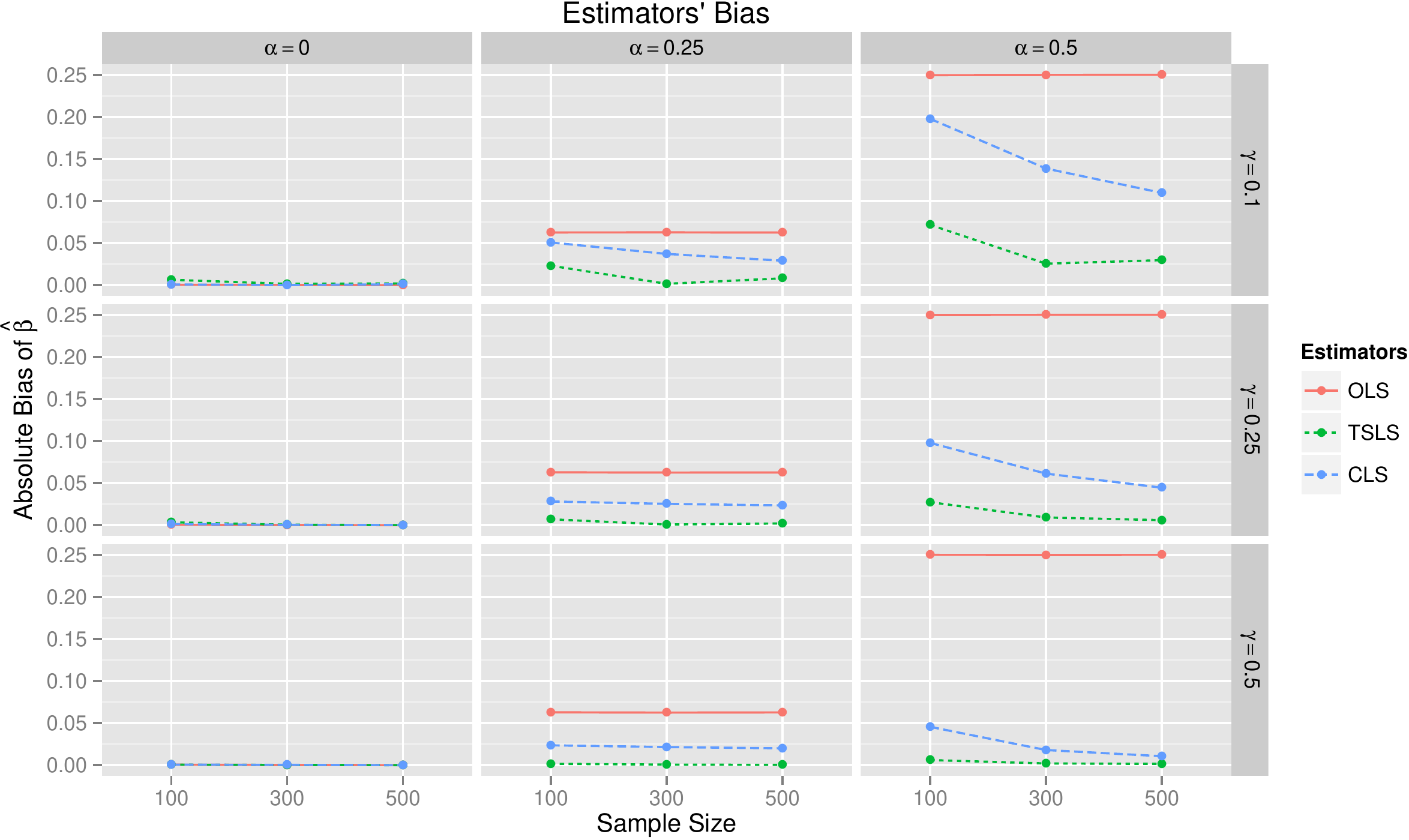}
   \caption{Monte Carlo estimates of the absolute value of the bias of
     the three estimators of interest, under the simulation scenarios
     described in figure \ref{fig:beta}. Observe that the three estimators
     exhibit no bias, when no confounding is present. 
     That is, the OLS estimator exhibits less bias, when
     $\al=\Cor(X_{i},U_{i})$ is low. Also, note that the finite-sample
     bias of the TSLS estimator tends to diminish with large sample
     sizes. This behavior is especially visible for large $\al$'s. 
     \label{fig:bias}}
\end{figure}
\begin{figure}[t]
  \centering
  \includegraphics[width=13cm]{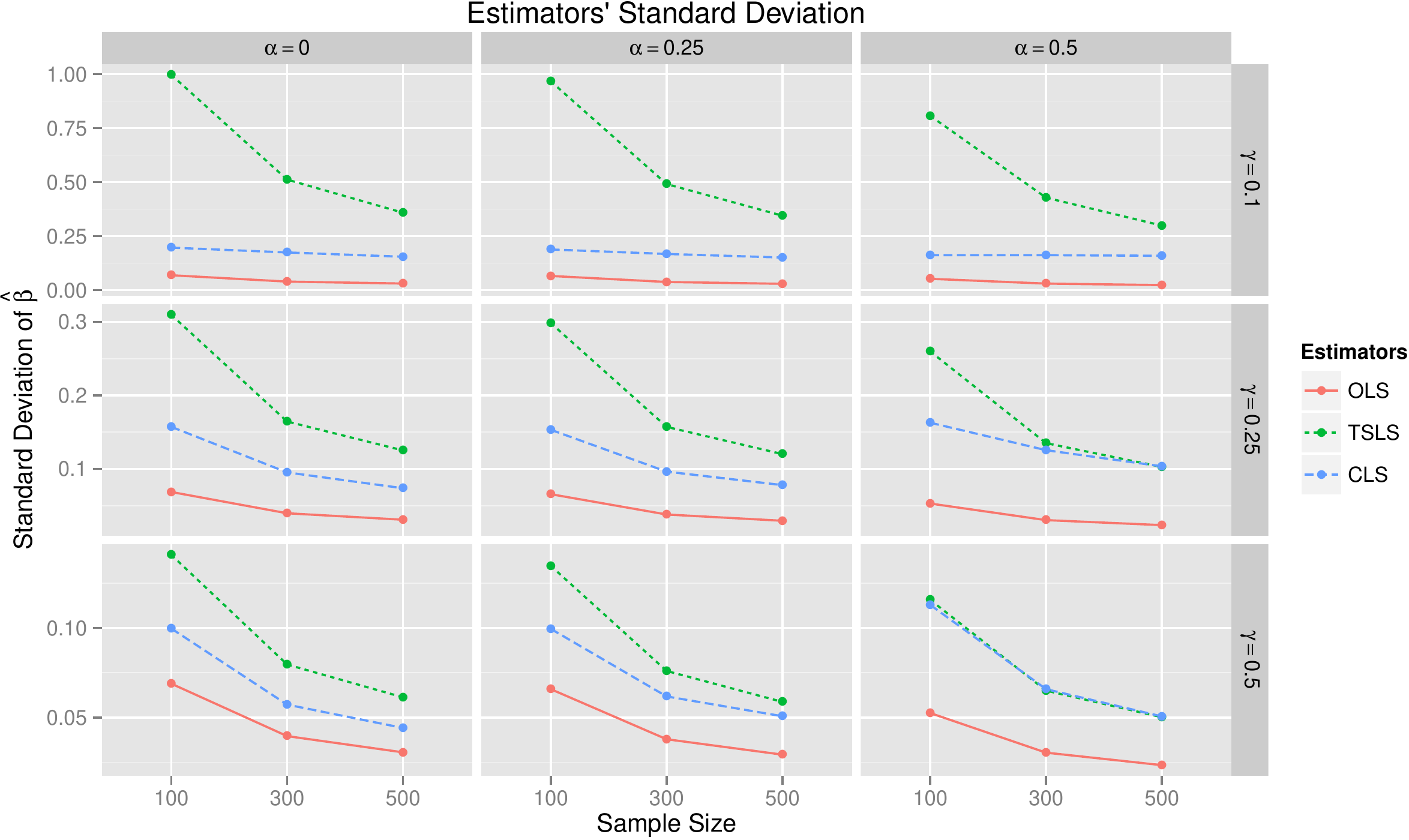}
   \caption{Monte Carlo standard deviation of the three estimators
     under scrutiny, under the scenarios
     described in figure \ref{fig:beta}. By corollary
     \ref{cor:bias-variance}, the variance of the OLS estimator is
     always smaller than its competitors, as verified in these
     simulations. Moreover, observe that the variance of the TSLS estimator increases
     as the strength of the instrument, $\ga=\Cor(X_{i},Z_{i})$, decreases.
     \label{fig:variance}}
\end{figure}
\begin{figure}[t]
  \centering
  \includegraphics[width=13cm]{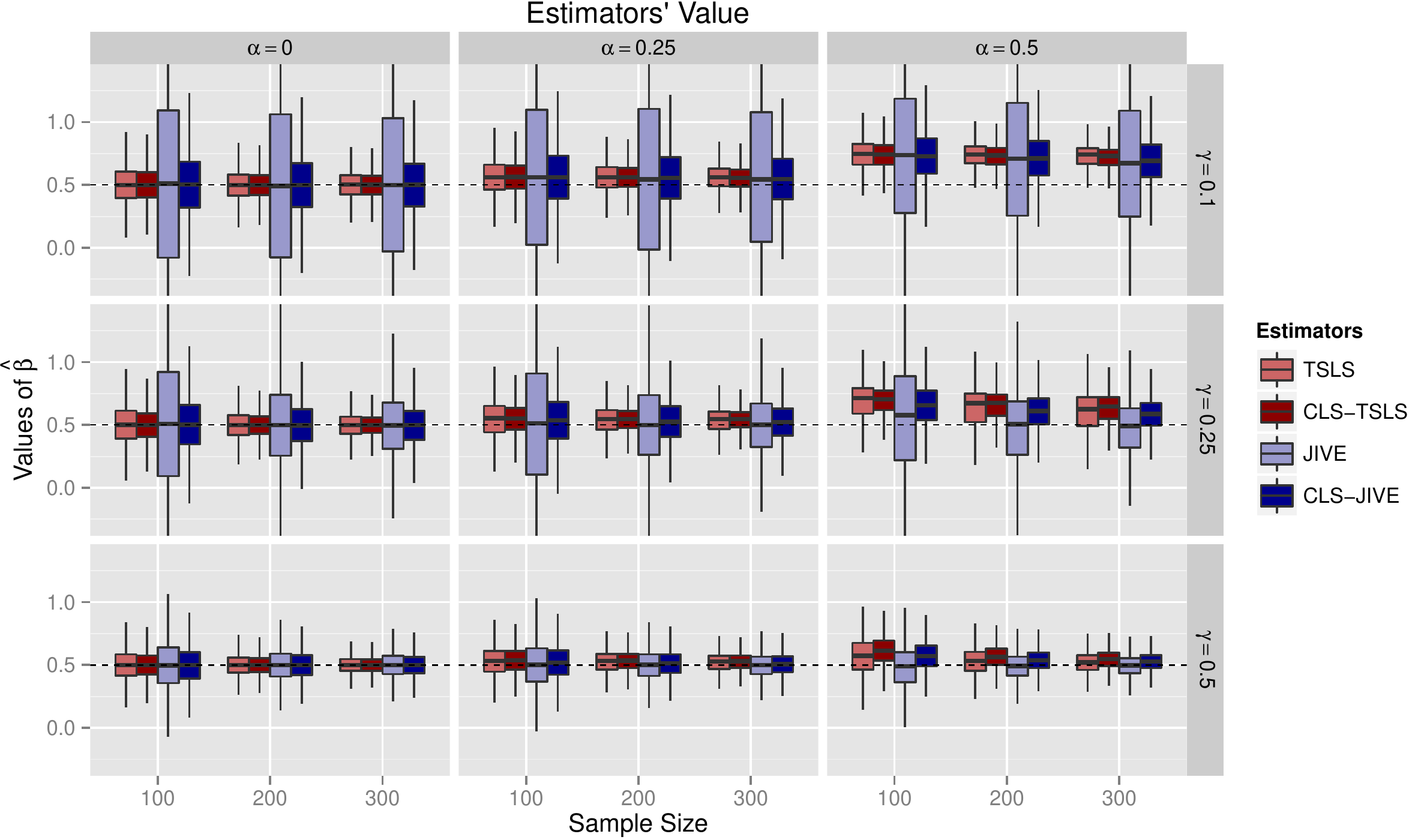}
   \caption{Comparison of the TSLS and JIVE estimators with their CLS
     counterparts, using $l=10$ uncorrelated
     instruments, whose multiple correlation with the outcome is given
     by $\ga$, and with $\al$ measuring the strength of bias, as in
     Model I. Data have been simulated using Model
     II from section \ref{sec:sim model ii}. The bootstrap estimate of
     $\pi_{n}$ for the CLS-JIVE is based on $B=100$
     resamples, as described in section \ref{sec:bootstrap} on
     bootstrap CLS estimation. All scenarios have
     been repeated over $10^{5}$ Monte Carlo iterations. 
     \label{fig:bcls1}}
\end{figure}
\begin{figure}[t]
  \centering
  \includegraphics[width=13cm]{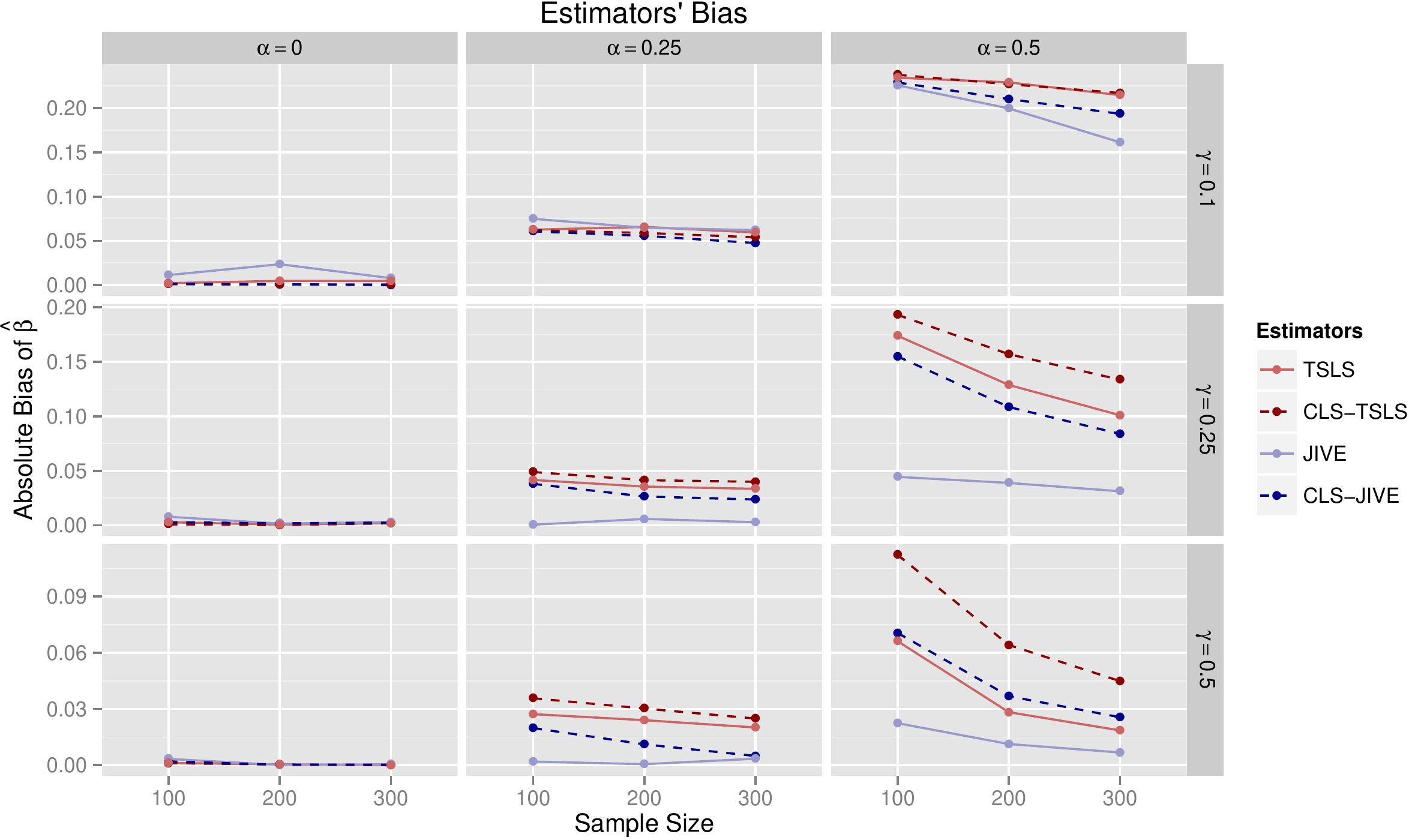}
   \caption{Monte Carlo estimates of the absolute value of the bias of
     four estimators of interest for Model II with $l=10$ uncorrelated
     instruments, whose multiple correlation with the outcome is given
     by $\ga$, and with the strength of the bias being denoted by
     $\al$. As in figure \ref{fig:bcls1}, all scenarios have been
     repeated over $10^{5}$ Monte Carlo iterations. 
     \label{fig:bcls2}}
\end{figure}

\subsection{Simulation Model II}\label{sec:sim model ii}
    We extend Model I to the case of several valid instruments. For
    convenience, these instruments are assumed to be uncorrelated. The
    second-level equation is taken to be identical to the second-level
    equation in equation (\ref{eq:model i}). The first-level equation,
    by contrast, now includes a vector of instruments, such that 
    \begin{equation}\label{eq:model ii}
        x_{i} = \bz_{i}\bla + u_{i}\al + \de_{i},
    \end{equation}
    for every $i=1,\ldots,n$; and where
    $\bz_{i}:=(z_{i1},\ldots,z_{il})$ is a row vector of $l$
    uncorrelated instruments. The strengths of each of these instruments are
    controlled by a column vector of parameters denoted by 
    $\bla:=(\la_{1},\ldots,\la_{l})\pri$. We here assume that
    the $\la_{j}$'s are held constant such that $\la_{j}:=\la$, for
    every $j=1,\ldots,l$.
    As for Model I, we fix $\var(Y_{i})=\var(X_{i})=1$, and generate
    the $Z_{ij}$'s and the $U_{i}$'s from a standard normal
    distribution, with $\sig_{z}^{2}:=1$ and $\sig^{2}_{u}:=1$,
    respectively. The formula for the error variance of the
    second-level equation, $\sig^{2}_{\ep}(\al)$, is identical to the
    one used in Model I. 

    The error variance for the first-level equation, denoted by
    $\sig^{2}_{\de}$, is also controlled by a parameter
    $\ga$, which is here defined to be the coefficient of multiple correlation
    between each $X_{i}$ and the $i\tth$ vector of $Z_{ij}$'s. As before, using the Bienaym\'e
    formula, the variance of each $X_{i}$ can be expanded as follows, 
    \begin{equation}\notag
        \var(X_{i}) = \sum_{j=1}^{l}\la^{2}_{j}\var(Z_{ij}) +
        \al^{2}\var(U_{i}) + \var(\de_{i}),
    \end{equation}
    which simplifies to $\var(X_{i})=l\la^{2}+\al^{2} + \sig^{2}_{\de}$,
    by our choice of variances for the $Z_{ij}$'s and $U_{i}$'s. When
    specifying $\var(X_{i})=1$, this gives
    $\sig^{2}_{\de}(\al,\ga,l)=1-(l\la^{2}+\al^{2})$, and moreover, when enforcing the
    positiveness of $\sig^{2}_{\de}$, we obtain
    $\la<\sqrt{(1-\al^{2})/l}$. Next, if we choose
    $\la:=\ga/\sqrt{l}$, then the parameter $\ga$ can be seen to correspond
    to the multiple correlation coefficient between each $X_{i}$ and
    the $i\tth$ vector of $Z_{ij}$'s. Indeed, we have $\ga^{2}=\br^{T}\bR^{-1}\br$, in which 
    $\br:=(r_{xz},\ldots,r_{xz})\pri$, with $r_{xz}:=\Cor(X_{i},Z_{ij})=\ga$;
    and where $\bR$ is the correlation matrix of the $i\tth$ vector of $Z_{ij}$'s, such
    that $R_{ab}:=\Cor(Z_{ia},Z_{ib})$, for every $a,b=1,\ldots,l$. Thus,
    as in Model I, we again obtain the upper bound,
    $\ga<\sqrt{1-\al^{2}}$, as well as
    $\sig_{\de}^{2}(\al,\ga)=1-(\ga^{2}+\al^{2})$.

\subsection{Monte Carlo Summary Statistics}\label{sec:mc statistics}
We evaluate the finite-sample performance of the estimators of interest
by comparing the Monte Carlo estimates of three different
population statistics. For every candidate estimator,
$\bbe_{n}^{\dag}$, its Monte Carlo distribution is given by the following empirical
distribution function (EDF), $\wh{F}(b) :=
T^{-1}\sum_{t}\cI\{\bbe_{nt}^{\dag}\leq b\}$,
where $\cI\{f_{t}\}$ denotes the indicator function taking a value of
one, if $f_{t}$ is true, and zero otherwise. For each simulation scenario,
we draw $T:=10^{5}$ realizations from the two models described in sections
\ref{sec:sim model i} and \ref{sec:sim model ii}.

Using these simulated samples, we compute the Monte Carlo estimates of
the bias, variance, and MSE; denoted by $\bias_{\wh{F}}(\bbe_{n}^{\dag})$,
$\var_{\wh{F}}(\bbe_{n}^{\dag})$, and $\mse_{\wh{F}}(\bbe_{n}^{\dag})$,
respectively. In figure \ref{fig:beta}, we have reported the
Monte Carlo distribution of the three estimators of interest under $T$ simulations from 
model I; whereas in figures \ref{fig:mse}, \ref{fig:bias},
and \ref{fig:variance} we have reported the Monte Carlo MSE, 
squared bias, and variance, respectively. The
quantities in these three figures have been square-rooted in order to
facilitate the comparison of these statistics with the estimators' values
in figure \ref{fig:beta}. Similarly, in figure
\ref{fig:bcls1}, we have reported the Monte Carlo
distributions of the TSLS, JIVE, as well as their CLS counterparts;
with the Monte Carlo estimates of their absolute value bias being
described in figure \ref{fig:bcls2}. 

\subsection{Results for Model I (Single Instrument)}\label{sec:sim results i}
The behavior of the CLS was found to be mainly controlled by the
strength of the instrument, $Z$. When the instrument was strongly
correlated with the predictor $X$ --that is, for large values of
$\ga=\Cor(X_{i},Z_{i})$; the values of the CLS estimator were close
to the ones of the TSLS estimators, as can be observed in the last row
of figure \ref{fig:beta}. By contrast, when the instrument was weak --that is,
for small values of $\ga$, the values of the CLS estimator were closer
to the ones of the OLS estimator, as can be seen in the first row of
figure \ref{fig:beta}. 

Proposition \ref{pro:mse optimal} stated that the MSEs of
the OLS and TSLS estimators are bounded below by the MSE of the CLS
estimator when the true proportion $\pi_{n}$ is known. These
Monte Carlo simulations appear to support a partial analog of
this result when $\pi_{n}$ is evaluated from the
data. Indeed, on one hand, figure \ref{fig:mse} shows that the MSE of the OLS
estimator tends to be smaller than the MSE of the CLS estimator,
when no confounding is present; thereby showing that proposition
\ref{pro:mse optimal} does not strictly hold when $\pi_{n}$ is estimated from
the data. However, on the other hand, one can also observe from figure \ref{fig:mse} that the Monte
Carlo MSE of the CLS estimator is smaller than or equal to
the one of the TSLS estimator under all considered scenarios. Thus,
it seems that a weaker version of proposition \ref{pro:mse optimal}
may hold for estimated $\pi_{n}$, which would solely pertain to a
comparison between the behavior of the CLS and TSLS estimators. 
In particular, observe that for strong instruments (i.e.~for large values of
$\ga$), the CLS estimator behave as well as the TSLS estimator,
whereas for weak instruments (i.e.~small values of $\ga$), the CLS
estimator outperforms the TSLS estimator. 

The behavior of these estimators can be better understood
by separately considering their bias and variance.
In figures \ref{fig:bias} and \ref{fig:variance}, we have
respectively reported the Monte Carlo estimates of the bias and
variance of the OLS, TSLS and CLS estimators. Naturally, the bias of
the three estimators tends to increase with the strength of the
confounder, which is controlled by $\al=\Cor(X_{i},U_{i})$. 
In particular, the bias of the OLS estimator becomes larges
as $\al$ increases. By contrast, the bias of the TSLS estimator
remains small for every value of $\al$. In fact, as stated in
Corollary \ref{cor:bias-variance}(i), the bias of the OLS
estimator is bounded from below by the bias of the TSLS
estimator. Moreover, the finite-sample
bias of the TSLS estimator can also be observed to decrease as the sample
size increases. As predicted, the bias of the CLS estimator is
comprised between the ones of the two other estimators; and the bias
of the CLS estimator approaches the one of the TSLS estimator, as the strength
of the instrument increases. 

Figure \ref{fig:variance} describes the behavior of the variance of
the estimators of interest under our various simulation scenarios. 
The variance of the three estimators tends to decrease as the
sample size increases. This downward trend is especially
noticeable for the TSLS estimator, which exhibits a high level of
variability, when the instrument is weak (i.e.~for small values of $\ga$).
As predicted by Corollary \ref{cor:bias-variance}(ii), the variance of
the TSLS estimator can be observed to be bounded from
below by the variance of the OLS estimator. In the presence of 
weak instruments, the CLS estimator's variance is close to the one of the OLS
estimator. As $\ga$ increases, however, the variance of the CLS
estimator converges to the one of the TSLS estimator.

\subsection{Results for Model II (Multiple Instruments)}\label{sec:sim
  results ii}
Our second set of simulations aimed to assess whether the use of an
estimator possessing better finite-sample properties could
be incorporated into the CLS framework. In figure \ref{fig:bias}, we
have already seen that the TSLS estimator suffers from a substantial finite-sample
bias. This was found to be especially the case when the instruments of
interest are comparatively weak, and the
bias is large. In particular, previous authors have shown that the
TSLS estimator's bias tends to be especially large, when several
instruments are used \citep{Angrist1995}. This limitation of the TSLS
estimator has been addressed in the literature by the
introduction of the JIVE, which was described in section \ref{sec:jive}. This
second set of simulations is thus based on $l=10$ uncorrelated
instruments, and allow us to compare the relative merits of using
either the TSLS estimator or the JIVE within the CLS framework. 
Consequently, we will refer to using the TSLS and using the JIVE
as unbiased estimators within the CLS, as the CLS-TSLS and CLS-JIVE
estimators, respectively. 

As predicted, the JIVE performs better than the TSLS estimator, when
used in conjunction with strong instruments, and in the presence of a
substantial amount of confounding (i.e.~$\al\geq0.25$), as can be seen from figures
\ref{fig:bcls1} and \ref{fig:bcls2}. Note, however, that for weak
instruments (i.e.~when the multiple correlation coefficient is $\ga=0.1$), the JIVE's
variance is very large. The TSLS estimator should therefore 
be favored under these scenarios, if one's choice of estimator is
motivated by a desire to minimize the MSE. 

The benefits of using the JIVE translate into corresponding
improvements when using the CLS-JIVE. This relationship is especially visible when
considering the bottom right panel of figure \ref{fig:bcls2}. Under a
set of strong instruments (i.e.~with a large multiple correlation
coefficient $\ga$), and under a substantial amount of confounding
(i.e.~large $\al$); one can
observe that the JIVE has a smaller finite-sample bias than the
TSLS estimator. Similarly, under the same scenario, the CLS-JIVE has
a correspondingly smaller finite-sample bias than the CLS-TSLS
estimator. This improvement in the CLS-JIVE was particularly
remarkable, because the proportion parameter,
$\wh\pi_{n}$, was estimated using only $B=100$ bootstrap samples. 
Thus, it appears that a relatively small number
of resampling is sufficient to produce a CLS-JIVE estimator that
outperforms the CLS-TSLS estimator. One may therefore conjecture that the CLS
framework could be used in conjunction with other
asymptotically unbiased estimators, even when the proportion
parameter is not available in closed-form.

\section{Applications to Econometrics}\label{sec:real}
Our proposed methods have been applied to a re-analysis of a classic
data set in econometrics, originally published by
\citet{Angrist1991}, which aimed to relate educational attainment
with earnings. This particular study has been the subject of
numerous replications and re-analysis, and therefore provides us with a
well-known example for evaluating the performance of CLS
estimation in a real-world data set. 

\subsection{Quarter-of-Birth as Instrument}\label{sec:real intro}
\citet{Angrist1991} reported a small but persistent seasonal pattern
of educational attainment over several decades between the 1920s and
the 1950s. They observed that two discrepant regulations in the
United States during that period have led to a `natural experiment', in
which individual differences in completed years of education could be
predicted by an individual's season of birth. On one hand, nationwide
school-entry requirements controlled the age at which a given child
began school. Indeed, at that period in the US, all children were expected to
reach six years of age by the first of January of their first year at
school. Thus, children born early in the year were likely to be older
than their peers in the same class. On the other hand, state-specific
compulsory schooling laws solely required pupils to remain in school
until their sixteenth birthday. Therefore, pupils born in the first
quarters of the year, wishing to leave school early, could do so at an
earlier stage than their peers born later in the year. 

These two regulations --school-entry requirements, and compulsory
schooling laws-- therefore conspired to enable children born in the
early quarters of the year to complete a smaller number of years of
education, if they were so inclined. Crucially, \citet{Angrist1991}
highlights that the randomness of an individual's birth date
is unlikely to be related to other events in an
individual's life; thereby precluding quarter-of-birth from being a
significant predictor of an individual's revenue later in life. Thus,
quarter-of-birth could be argued to constitute a legitimate
instrument for education attainment, fulfilling the exclusion
criterion, in the sense that it is not directly related to
earnings. Note, however, that some authors have disputed the validity
of quarter-of-birth as an instrument for education \citep{Bound1996}.
\begin{figure}[t]
\centering\small
\begin{tikzpicture}
    \draw (-3,0) node[draw,inner sep=8pt](z){$Z_{1}$};
    \draw (0,0) node[draw,inner sep=8pt](x){$X_{1}$};
    \draw (+4,0) node[draw,inner sep=8pt](y){$Y$};
    \draw (2,-1.75) node[draw,inner sep=8pt](x2){$X_{2}$};
    \draw (+0,+1.5) node[draw,circle,inner sep=5pt](d){$\de$};
    \draw (+4,+1.5) node[draw,circle,inner sep=5pt](e){$\ep$};
    \draw (2,+1.75) node[draw,circle,inner sep=5pt](u){$U$};
    \draw[thick,->] (z) -- (x) node[midway,above]{$\bGa_{1}$};
    \draw[thick,->] (x) -- (y) node[midway,above]{$\bbe_{1}$};
    \draw[thick,->] (x2) -- (x) node[midway,anchor=north east]{$\bGa_{2}$};
    \draw[thick,->] (x2) -- (y) node[midway,anchor=north west]{$\bbe_{2}$};
    \draw[thick,->] (u) -- (x) node[midway,anchor=south east]{$\al$};
    \draw[thick,->] (u) -- (y) node[midway,anchor=south west]{$\al$};
    \draw[thick,->] (e) -- (y) node[midway,anchor=north west]{};
    \draw[thick,->] (d) -- (x) node[midway,anchor=north east]{};
\end{tikzpicture}
\caption{Graphical representation of the IV model described in
  equations (\ref{eq:angrist1}) and (\ref{eq:angrist2}), composed of a 
  vector of endogenous variables, $X_{1}$, and a vector of exogenous
  variables $X_{2}$. This graph corresponds to
  a two-level system of equations composed of
  $Y=X_{1}\bbe_{1}+X_{2}\bbe_{2}+U\al+\ep$, and 
  $X_{1}=Z_{1}\bGa_{1}+X_{2}\bGa_{2}+U\al+\de$.    
  This model can be seen to be a generalization of the simpler model
  described in figure \ref{fig:model}.}
  \label{fig:angrist}
\end{figure}
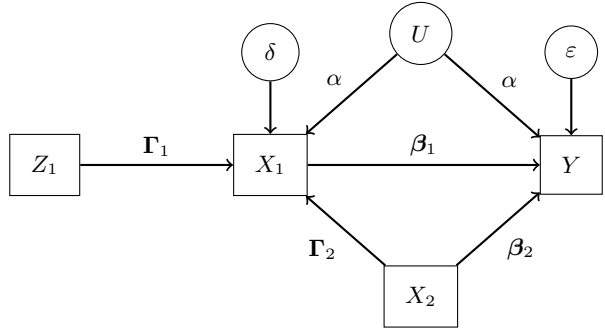
\begin{table*}[t]
\centering
\caption{Replications and extensions of the IV analysis in \citet{Angrist1991}.}
\label{tab:angrist}
\begin{3table}
\begin{tabular}{@{}>{\centering}m{10pt}
                >{\centering}m{45pt}|>{\centering}m{35pt}>{\centering}m{35pt}
                >{\centering}m{35pt}>{\centering}m{40pt}m{1pt}@{}}
   \toprule
   \mcol{2}{c}{\textit{Covariates ($X_{2}$)}} & OLS & TSLS & JIVE & CLS-TSLS &\\
   \midrule
    &                & \\
   \mrow{3}{*}{\textbf{A.\tnote{a}}}  
    &\it Estimator:  &   0.0802   &  0.0769  &  0.0755  & 0.0800 &\\
    &\it Std.~Error: &  (0.0004)  & (0.0150) & (0.0210) & (0.0126)  & \\
    &\it Proportion: &    --      &   --     &          &  (0.95)   & \\
    &                & \\
   \mrow{3}{*}{\textbf{B.\tnote{b}}}   
    &\it Estimator:  &   0.0802   &  0.1398  & -0.1276  &  0.1254 &\\
    &\it Std.~Error: &  (0.0004)  & (0.0334) & (1.7233) & (0.0317) & \\
    &\it Proportion: &    --      &   --     &          &  (0.24)  & \\
    &                & \\
   \mrow{3}{*}{\textbf{C.\tnote{c}}}   
   &\it Estimator:   &   0.0701   &  0.0669  &  0.0650  &  0.0700 &\\
   &\it Std.~Error:  &  (0.0004)  & (0.0151) & (0.0234) & (0.0097) & \\
   &\it Proportion:  &     --     &    --    &          &  (0.96)  & \\
    &                & \\
   \mrow{3}{*}{\textbf{D.\tnote{d}}}   
   &\it Estimator:   &   0.0701   &  0.1065  &  0.0224  &  0.0899 &\\
   &\it Std.~Error:  &  (0.0004)  & (0.0334) & (2.1380) & (0.0235) & \\
   &\it Proportion:  &    --      &   --     &          &  (0.46)  & \\
   \bottomrule
\end{tabular}
\begin{tablenotes}
    \item[a] This only includes ten dummies for the years of birth.
    \item[b] This includes years of birth,
      and age with the exclusion of the 1929 dummy.
    \item[c] Covariates include years of birth,
      and some extraneous covariates described in the text. 
    \item[d] Covariates are years of birth, age,
      and other extraneous covariates, with the exclusion of the 1929 dummy.
\end{tablenotes}
\end{3table}
\end{table*}

\subsection{Model with Extraneous Covariates}\label{sec:real model}
The model used by \citet{Angrist1991} generalizes the IV model
described in sections \ref{sec:ols} and \ref{sec:tsls}. Here, the
vector of $k$ predictors, $X$, is partitioned into $k_{1}$ endogenous
variables denoted by $X_{1}$, and $k_{2}$ exogenous variables denoted
by $X_{2}$. In Angrist and Krueger's model, the sole endogenous variable of
interest is the completed years of education of each
subject. Therefore, we have $k_{1}=1$. The outcome variable,
$Y$, which represents the log-transformed weekly wage in dollars of
each subject, is then modelled as follows, 
\begin{equation}\label{eq:angrist1}
      Y = X_{1}\bbe_{1} + X_{2}\bbe_{2} + \ep,
\end{equation}
where $\bbe_{1}$ and $\bbe_{2}$ are column vectors of dimension
$k_{1}$ and $k_{2}$, respectively. The endogeneity of $X_{1}$ leads to
the use of a row vector of instruments, $Z_{1}$, of dimension $1\times
l_{1}$, here denoting a set of dummy variables for the interactions
between quarters and years of birth. These
instruments are combined with the $k_{2}$ exogenous variables from the
second-level equation in order to produce the following first-level
equation,
\begin{equation}\label{eq:angrist2}
      X = Z_{1}\bGa_{1} + X_{2}\bGa_{2} + \de;
\end{equation}
where $\bGa_{1}$ and $\bGa_{2}$ are vectors of parameters of order
$l_{1}\times k$ and $k_{2}\times k$, respectively; where
$k:=k_{1}+k_{2}$. A graphical representation of this
model is given in figure \ref{fig:angrist}.

In matrix notation, given a sample of $n$ subjects, this model can be
expressed with respect to an $n$-dimensional column vector of
error terms, $\bep$, for the first-level equation; and a matrix, $\bD$,
of error terms of order $n\times k_{1}$ for the second-level
equation. Altogether, we thus have the following linear system, 
\begin{equation}\notag
  \begin{aligned}
      \by &= \bX_{1}\bbe_{1} + \bX_{2}\bbe_{2} + \bep,\\ 
      \bX &= \bZ_{1}\bGa_{1} + \bX_{2}\bGa_{2} + \bD.
  \end{aligned}
\end{equation}
Moreover, we can construct the following block matrices,
$\bX:=[\bX_{1}\;\bX_{2}]$ and $\bZ:=[\bZ_{1}\;\bX_{2}]$ that are of
order $n\times k$ and $n\times l$, respectively; in which we have used
$k:=k_{1}+k_{2}$ and $l:=l_{1}+k_{2}$. In addition, we also define the
vectors of parameters $\bbe:=[\bbe_{1}^{T}\;\bbe_{2}^{T}]^{T}$ and
$\bGa:=[\bGa_{1}^{T}\;\bGa_{2}^{T}]^{T}$, of order $k\times1$
and $l\times k$, respectively. Equipped with these block matrices, we
can immediately recover the standard TSLS estimator formula described in section
\ref{sec:tsls}. It also follows that this model is well-identified
whenever $l_{1}\geq k_{1}$ is satisfied, as in the model at hand.

\subsection{Results of the Re-analysis}\label{sec:real results}
The results described in table 4 of \citet{Angrist1991} have been
replicated and extended. In this portion of their analysis,
Angrist and Krueger have considered the cohort of men born
between 1920 and 1929. This constitutes a sample of $n=247,199$
subjects. All data are here based on the 1970 US census. Using the notation
introduced in equations (\ref{eq:angrist1}) and (\ref{eq:angrist2}),
the outcome variable, $Y$, is defined as the mean log-transformed
weekly wages; the endogenous variable, $X_{1}$, is the number of
completed years of education; and the instrument, $Z_{1}$, is
composed of a vector of interaction terms between quarter-of-year dummies and
year-of-birth dummies, totalling 40 different instruments. 

In addition, the authors have also considered different sets of exogenous covariates,
denoted by $X_{2}$ in equations (\ref{eq:angrist1}) and
(\ref{eq:angrist2}). The choice of exogenous covariates has been
reported as covariate scenarios A to D in table \ref{tab:angrist}.
All scenarios include ten dummy variables for each year of birth, except
scenarios B and D, in which the 1929 dummy variable has been removed due to
multicollinearity, following \citet{Angrist1991}. In scenarios B
and D, this is supplemented by an age covariate. (Note that we have
not included age squared in this analysis as was conducted in
\citet{Angrist1991}, since age and age squared were found to be almost
perfected correlated.) Finally, scenarios C and D include some further
dummy variables for race, marital status, eight different regions of
residence, and whether or not the subjects were primarily located in a
standard metropolitan statistical area (SMSA).

For scenarios A and C, the OLS and TSLS columns in table
\ref{tab:angrist} are exact replicates of the results described in 
\citet{Angrist1991}. The values and standard errors for these
estimators are slightly different under scenarios B and D, due to the
non-inclusion of age squared in the present analysis. 
The variance of the CLS estimator was computed using the bootstrap, as
described in section \ref{sec:bootstrap variance}. As expected, one
can observe that the CLS strikes a balance between the OLS and the TSLS
estimators, such that for all four scenarios, the value of the CLS is
comprised between the one of the OLS and the one of the TSLS. The
value taken by the estimate of the proportion
parameter has also been reported. By comparing scenarios A and C
with scenarios B and D in table \ref{tab:angrist}, it
can be seen that the inclusion (resp.~non-inclusion) of the age
variable leads to a decrease (resp.~increase) in the value of
$\wh\pi_{n}$. 

Thus, this re-analysis suggests that while the use of quarter-of-birth
as an IV for education may be justified when age is included as a
supplementary exogenous variable in the analysis; it appears that an
estimator closer to the OLS is sufficient, when the age variable is
not included. The JIVE and its CLS counterpart have also been reported
for comparison. The behavior of these two estimators is comparable to
the one of the TSLS and CLS-TSLS estimators. Note that for
computational convenience, the variances
of the CLS estimators and the proportion parameter of the CLS-JIVE
have been here estimated using bootstraps based on solely $10^2$
resamples. This real data analysis thus demonstrates that a small number of
bootstrap samples suffices to produce reasonable estimates of the
standard errors of the CLS estimators. 

\section{Conclusions}\label{sec:conclusion}
In this paper, we have shown that different IV and non-IV estimators
can be the object of convex combinations, and the proportion
parameters of these combinations can be consistently estimated from
the data. Such CLS estimators are therefore particularly attractive, 
since they automatically down-weight the influence of weak
instruments, when these are not expected to lead to a large reduction
in bias. Moreover, this inferential framework bears some
similarities with the Hausmann test. Theoretically, our 
proposed estimator minimizes an empirical MSE over a restricted class of
estimators, consisting of all the possible convex combinations of
the OLS and TSLS estimators. We have also seen the TSLS estimator in
the definition of the CLS can be replaced by other estimators such as
the JIVE. 

For finite $n$, the moments of the TSLS estimator and other $k$-class
estimators need not exist, as demonstrated by
\citet{Kinal1980}. It is common in such situations to assume that at least three
instruments are present. This condition ensures that the first two
moments of the estimators under scrutiny exist; and the first two
moments of the OLS and TSLS estimators are needed to compute our
proposed empirical estimator of the MSE. However, note that,
asymptotically, all moments exist and that, theoretically, this strategy can thus
be applied to any number of instruments. Indeed, irrespective of the
number of instruments used, every CLS estimator is guaranteed to be
asymptotically consistent. Observe that a similar issue arises for
all IV models, in the sense that such models are only
asymptotically identifiable. In the present case, our proposed
estimation procedure only possesses asymptotic first and second moments. However, just as
finite-sample non-identifiability is not a point of concern for the
use of IV methods in practice, the reliance of the CLS framework on
asymptotic moments does not constitute a significant hindrance to 
the general application of this method. 

The interpretation of the combination of several estimators such as the OLS and TSLS
estimators relies on the assumption of \textit{effect homogeneity}
[REFs needed]. That is, we are assuming that the causal effect is identical for all
subpopulations. This is a strong assumption, which needs not hold in
practice. Thus, further research will be needed to clarify the
assumptions required for employing the CLS, when effect
homogeneity is not expected to hold. Indeed, in such cases, the OLS and the
TSLS estimators may represent the local treatment effects in different
subpopulations. Therefore, the resulting convex combination of such
estimators may be difficult to interpret. Additional assumptions may
be required to ensure that the different estimators of interest are
sufficiently comparable to be combined in this fashion. 


Observe that the estimators utilized to produce the CLS estimator do not need to
share the same data. Indeed, when constructing a combination of the
OLS and TSLS estimators, only the TSLS estimator relies on the
instrument, $Z$. Thus, one may also consider how
such a framework could be extended to other modelling strategies,
such as mixed-effects models for longitudinal models
\citep[see][]{Wooldridge2002}. Similarly, this method could also be
extended to models including measurement errors. Calibrated regression
is often used in conjunction with explanatory variables, in order to
diminish the effect of measurement error. In such cases, the resulting
combination utilizes estimators based on distinct data
sets. In addition, recall that of the theoretical results derived in
this paper are relying one the assumption that the instruments under
scrutiny are valid, in the sense that (A4) is assumed to hold. The
consequences of relaxing this assumption are difficult to anticipate,
and further research should certain consider such situations, as has
been done by previous authors in the case of the TSLS estimator 
\citep{JacksonInpress}.

Thus far, we have combined the OLS estimator with either the TSLS
estimator or the JIVE. Observe, however, that we are not restricted to
choosing the OLS as a reference estimator. Within the bootstrap CLS
framework described in section \ref{sec:bootstrap}, one could also
choose to combine the TSLS and the JIVE, for instance. In general, any pair of
estimators could be the object of a convex combination. For such a
combination to be useful, it suffices that these estimators are
ordered in terms of bias and variance, as in the canonical case of the
OLS and TSLS estimators given in proposition \ref{pro:bias-variance}. 
Another natural theoretical extension of the current work would be to
derive a central limit theorem for the CLS estimator. This would allow
researchers to obtain approximate confidence intervals for the CLS
estimator, using normal asymptotic theory. Such a central
limit theorem would also enable the construction of adequate statistical
tests for evaluating whether or not the values of individual
parameters are statistically significant. Such extensions are not
expected to be too arduous, since under the assumptions stated in this
paper, the CLS estimator is consistent; and moreover, estimators such
as the TSLS or JIVE are known to be asymptotically normally
distributed under standard assumptions. 

\appendix
\section{Proofs of Propositions}\label{sec:proof}
\begin{proof}[Proof of Proposition \ref{pro:bias-variance}]
The proof of (i) immediately follows from the definition of the
empirical bias in equation (\ref{eq:bias}), which implies that the
empirical bias of the TSLS estimator is identically zero, for every
realization. The proof of (ii) can
be conducted in two steps. Firstly, one can show that for every
pair of matrices $\bX$ and $\wh{\bX}:=\bH_{z}\bX$, we have 
\begin{equation}\label{eq:bias-variance1}
         \bX\pri\bX \succeq
         \wh\bX{}\pri\wh\bX.
\end{equation}
Observe that we have the following equivalence due to the symmetry of
$\bH_{z}$, 
\begin{equation}\label{eq:xx1}
      \wh\bX{}\pri\bX = (\bH_{z}\bX)\pri\bX 
      = \bX\pri\bH_{z}\bX = \bX\pri\wh\bX. 
\end{equation}
Secondly, the inner product of $\wh\bX$ can also be simplified
using the idempotency of $\bH_{z}$, such that 
\begin{equation}\label{eq:xx2}
     \bX\pri\wh\bX = \bX\pri\bH_{z}\bX =
     \bX\pri\bH_{z}\bH_{z}\bX = \wh\bX{}\pri\wh\bX.
\end{equation}
Then, expanding the dot product of $\bX-\wh\bX$, and applying
equalities (\ref{eq:xx1}) and (\ref{eq:xx2}), we obtain
\begin{equation}\notag
    (\bX - \wh\bX)\pri(\bX - \wh\bX) 
    = \bX\pri\bX - 2\bX\pri\wh\bX + \wh\bX{}\pri\wh\bX 
    = \bX\pri\bX - \wh\bX{}\pri\wh\bX. 
\end{equation}
Observe that the dot product, $(\bX - \wh\bX)\pri(\bX -
\wh\bX)$, is a Gram matrix, and therefore it is necessarily positive
semi-definite. Consequently, this implies that 
\begin{equation}\notag
     \bX\pri\bX - \wh\bX{}\pri\wh\bX = (\bX - \wh\bX)\pri(\bX -
     \wh\bX) \succeq \bzero,
\end{equation}
and hence $\bX\pri\bX \succeq
\wh\bX{}\pri\wh\bX$, by the definition of the positive semidefinite order.

Next, observe that the estimates of the error variances under the
OLS and TSLS estimation procedures are defined as 
\begin{equation}\notag
  \begin{aligned}
    (n-k)\wh\sig^{2}_{n} &:= (\by -
    \bX\wh\bbe_{n})\pri(\by - \bX\wh\bbe_{n}) \\
    &\leq (\by - \bX\wti\bbe_{n})\pri(\by -
    \bX\wti\bbe_{n}) =: (n-k)\wti\sig^{2}_{n},
  \end{aligned}
\end{equation}
where the inequality follows from the optimality of the OLS; and therefore, 
\begin{equation}\notag
    \wh\sig^{2}_{n}(\bX\pri \bX)^{-1} \preceq\;
     \wti\sig^{2}_{n}(\wh{\bX}{}\pri\wh{\bX})^{-1},
\end{equation}
since assumption (A6) guarantees that both sides are invertible. 
\end{proof}

\begin{proof}[Proof of Corollary \ref{cor:bias-variance}]
   For inequality (i), observe that the empirical bias of the TSLS
   estimator is identically zero for every realization and every
   $n$. Moreover, by equation \ref{eq:bias}, the empirical bias of the OLS
   estimator is given by 
   \begin{equation}\notag
       \wh\bias{}^{2}(\wh\bbe_{n}) = (\wh\bbe_{n} -
       \wti\bbe_{n})(\wh\bbe_{n} - \wti\bbe_{n})\pri \succeq \bzero,
   \end{equation}
   since the LHS is a Gram matrix, and is therefore positive
   semidefinite for every realization. Since it holds for every $n$,
   this positive semidefiniteness is preserved in the limit. Moreover, the
   inequality in (ii) immediately follows from the
   fact that the variances of these two estimators converge to the
   zero matrix. 
\end{proof}

\begin{proof}[Proof of Proposition \ref{pro:pi}]
   The optimal value of $\pi_{n}$ can be found by minimizing the
   criterion of interest, which will be denoted by $f(\pi):=\mse(\bar\bbe_{n}(\pi))$.
   For expediency, we will expand this criterion as was done in
   equation (\ref{eq:mse pi}) such that 
   \begin{equation}\notag
       \tr f(\pi) = \tr(\pi^{2}M_{1} + 2\pi(1-\pi)C + (1-\pi)^{2}M_{2}),
   \end{equation}
   with $M_{1}:=\mse(\wh\bbe_{n})$, $C:=\cse(\wh\bbe_{n},\wti\bbe_{n})$, and
   $M_{2}:=\mse(\wti\bbe_{n})$; and where recall that $\wh\bbe_{n}$ and
   $\wti\bbe_{n}$ denote the OLS and TSLS estimators,
   respectively. Since the derivative is a linear operator, it
   commutes with the trace, and we obtain
   \begin{equation}\notag
        \tr(\partial f/\partial\pi) = 
        2\pi\tr(M_{2} - 2C + M_{1}) - 2\tr(M_{2} - C),
   \end{equation}
   Setting this expression to zero, yields $\pi_{n} :=
   \tr(M_{2}-C)/\tr(M_{2}-2C + M_{1})$, as required. Naturally,
   this minimization holds for every choice of $n$, thereby
   proving the first part of proposition \ref{pro:pi}. 

   In addition, one can show that this minimizer is unique by performing
   a second derivative test, such that we obtain 
   \begin{equation}\label{eq:second derivative}
        \tr(\partial^{2}f/\partial\pi^{2}) = 2\tr(M_{1} - 2C + M_{2}).
   \end{equation}
   Since by assumption, the random vectors, $\wh\bbe_{n}$ and $\wti\bbe_{n}$, are
   elementwise squared-integrable, the components, $\E[(\wh\be_{nj}
   - \be_{j})^{2}]$, of $M_{1}$ are finite for every
   $j=1,\ldots,k$. Hence, using the linearity
   of the trace, the MSE of $\wh\bbe_{n}$ can be treated as a sum of
   real numbers, thereby yielding the $L^{2}$-norm on $\R^{k}$, such
   that 
   \begin{equation}\notag
      \big(\tr\mse(\wh\bbe_{n})\big)^{1/2}
         = \bigg(\sum_{j=1}^{k}\E\!\lt[(\wh\be_{nj} - \be_{j})^{2}\rt]\bigg)^{1/2}
         =: ||\wh\bbe_{n}-\bbe||.
   \end{equation}
   The latter quantity will be referred to as the root (trace) MSE of
   $\wh\bbe_{n}$, and will be denoted by RMSE.

   By the same reasoning, it can be shown that $C$ and
   $M_{2}$ corresponds to the inner product, $\lan\wh\bbe_{n}-\bbe,\wti\bbe_{n}-\bbe\ran$, and
   the squared norm, $||\wti\bbe_{n}-\bbe||^{2}$, on $\R^{k}$. Thus,
   equation (\ref{eq:second derivative}) can now be re-expressed as follows,
   \begin{equation}\notag
      \tr(\partial^{2}f/\partial\pi^{2})
        = 2\Big(||\wh\bbe_{n}-\bbe||^{2}
        - 2\lan\wh\bbe_{n}-\bbe,\wti\bbe_{n}-\bbe\ran
        + ||\wti\bbe_{n}-\bbe||^{2}\Big).
   \end{equation}
   The Cauchy-Schwarz inequality can here be invoked to produce an
   upper bound on the cross-term in the latter equation, 
   \begin{equation}\notag
       \lan\wh\bbe_{n}-\bbe,\wti\bbe_{n}-\bbe\ran
       \leq ||\wh\bbe_{n}-\bbe||\cdot||\wti\bbe_{n}-\bbe||. 
   \end{equation}
   It then suffices to complete the square in order to obtain the following lower
   bound,
   \begin{equation}\notag
      \tr(\partial^{2}f/\partial\pi^{2})
      \geq 2\Big(||\wh\bbe_{n}-\bbe|| - ||\wti\bbe_{n}-\bbe||\Big)^{2} \geq 0,
   \end{equation}
   for every $n$, and where equality only holds when the RMSEs of
   $\wh\bbe_{n}$ and $\wti\bbe_{n}$ are identical, as required. 
\end{proof}

\begin{proof}[Proof of Proposition \ref{pro:mse}]
   Firstly, observe that the stochastic convergence of
   $\bar\bbe_{n}(\pi)$ to $\pi\wh\bbe + (1-\pi)\wti\bbe$ is immediate
   from the convergence of $\wh\bbe_{n}$ and $\wti\bbe_{n}$ to their
   respective limits, $\wh\bbe$ and $\wti\bbe$. Naturally, this
   holds for every $\pi\in[0,1]$. 
   
   Secondly, recall that the empirical MSE of $\bar\bbe_{n}(\pi)$ can
   be decomposed into a variance and a bias term, as in equation
   (\ref{eq:empirical mse}), such that for every $\pi$, we have
   \begin{equation}\notag
       \wh\mse(\bar\bbe_{n}(\pi)) 
       = \wh\var(\bar\bbe_{n}(\pi)) + \wh\bias{}^{2}(\bar\bbe_{n}(\pi)).
   \end{equation}
   The variances of both the OLS and TSLS estimators are known to
   converge to zero. Moreover, the bias of the TSLS estimator is also known to
   converge to zero, and this can be seen to be also true for the
   cross-bias term. Thus, the stochastic limit of
   $\wh\mse(\bar\bbe_{n}(\pi))$ reduces to the weighted limit of the empirical bias of
   the OLS estimator, $\pi^{2}\wh\bias{}^{2}(\wh\bbe_{n})$. Using the
   consistency of the TSLS estimator, this latter term satisfies
   \begin{equation}\notag
       \plim_{n\to\infty} \wh\bias(\wh\bbe_{n}) 
       = \plim_{n\to\infty} (\wti\bbe_{n} - \wh\bbe_{n})
       = \bias(\wh\bbe) = (\bbe - \wh\bbe),
   \end{equation}
   since $\wh\bbe_{n}\stack{p}{\longrightarrow}\wh\bbe$, and 
   $\wh\bbe=\lim_{n}\E[\wh\bbe_{n}]$; which is the true bias of the
   OLS estimator. The result then follows by using the continuous mapping
   theorem. 
\end{proof}

\begin{proof}[Proof of Corollary \ref{cor:pi}]
   The minimization of the empirical MSE follows from the arguments used
   in the first part of proposition \ref{pro:pi}, and simplifying the
   closed-form formula for $\wh\pi_{n}$ using our adopted definitions
   for the empirical bias of the TSLS estimator.
\end{proof}

\begin{proof}[Proof of Proposition \ref{pro:mse optimal}]
   This inequality relates the theoretical MSEs of the CLS,
   OLS and TSLS estimators. This result follows from the convexity of
   the MSE with respect to its argument, $\bbe^{\dag}_{n}$; in which
   $\bbe^{\dag}_{n}$ represents any candidate estimator. The trace of the MSE
   can indeed be seen to be a convex quadratic form, $\bx\pri\bA\bx$,
   where $\bA$ is here the identity matrix and
   $\bx:=\bbe^{\dag}_{n}-\bbe$. That is, for every estimator,
   $\bbe^{\dag}_{n}$, let
   \begin{equation}\notag
       f(\bbe_{n}^{\dag}-\bbe) 
       := \tr\big((\bbe^{\dag}_{n}-\bbe)(\bbe^{\dag}_{n}-\bbe)\pri\big)
       = (\bbe^{\dag}_{n}-\bbe)\pri(\bbe^{\dag}_{n}-\bbe),
   \end{equation} 
   using the cyclic property of the trace, which shows that
   this quadratic form is convex. Thus, for every two estimators,
   $\wh\bbe_{n}$ and $\wti\bbe_{n}$, and every $\pi\in[0,1]$; using
   the fact that $\bbe$ can always be expressed as $\pi\bbe +
   (1-\pi)\bbe$, we have  
   \begin{equation}\notag     
     \begin{aligned}
       f\Big(\pi\wh\bbe_{n} + (1-\pi)\wti\bbe_{n} - \bbe\Big)
       & = f\Big(\pi(\wh\bbe_{n}-\bbe) + (1-\pi)(\wti\bbe_{n} -\bbe)\Big) \\
       & \leq \pi f\Big(\wh\bbe_{n}-\bbe\Big) + (1-\pi)f\Big(\wti\bbe_{n}-\bbe\Big).
     \end{aligned}
   \end{equation} 
   Since by definition, $\bar\bbe_{n}(\pi):=\pi\wh\bbe_{n} +
   (1-\pi)\wti\bbe_{n}$, and moreover $\tr(\mse(\bbe^{\dag}_{n})) =
   \E[f(\bbe_{n}^{\dag})]$ for every $\bbe^{\dag}_{n}$, it then
   follows that, using the linearity of the expectation, 
   \begin{equation}\notag      
      \tr\mse(\bar\bbe_{n}(\pi)) \leq  
      \pi\tr\mse(\wh\bbe_{n}) + (1-\pi)\tr\mse(\wti\bbe_{n}).
   \end{equation}
   Finally, observing that the set of convex combinations of the form, $\pi
   f(\wh\bbe_{n}) + (1-\pi)f(\wti\bbe_{n})$, necessarily includes the
   endpoints of the corresponding line segment, we hence obtain 
   \begin{equation}\notag
     \begin{aligned}
       \min_{\pi}\Big(\tr\mse(\bar\bbe_{n}(\pi))\Big) 
          &\leq \min_{\pi}\lt(\pi\tr\mse(\wh\bbe_{n}) + (1-\pi)\tr\mse(\wti\bbe_{n})\rt) \\
          &\leq \min \Big\lb \tr\mse(\wh\bbe_{n}), \tr\mse(\wti\bbe_{n}) \Big\rb,
     \end{aligned}
   \end{equation}
   as required.
\end{proof}

\begin{proof}[Proof of Proposition \ref{pro:cls consistency}]
   We use a sandwich argument to prove that the CLS estimator is MSE
   consistent, as stated in (ii). For every $p\in[0,1]$, let 
   \begin{equation}\notag
       T_{n}(p) := 
        \tr\wh\mse(\bar\bbe_{n}(p))
        - \tr\mse(\bar\bbe_{n}(p)).
   \end{equation}  
   By proposition \ref{pro:mse}, we know that
   $T_{n}(p)\stack{p}{\longrightarrow}0$, for every $p$. However, the
   quantity of interest in the case of proposition \ref{pro:cls
     consistency} can be defined as follows, 
   \begin{equation}\notag
       T\as_{n}(p) := 
        \tr\wh\mse(\bar\bbe_{n}(p))
        - \tr\mse(\bar\bbe_{n}(\pi));
   \end{equation}
   with $p$ being chosen as $\wh\pi_{n}$, and where $\pi$ denotes
   the true proportion minimizing the theoretical MSE. 
   Firstly, observe that we can find a lower bound for
   $T\as_{n}(\wh\pi_{n})$ with respect to $T_{n}(p)$ by a judicious choice of
   $p$. That is, 
   \begin{equation}\notag
     \begin{aligned}
       T_{n}(\wh\pi_{n}) 
       & = \tr\wh\mse(\bar\bbe_{n}(\wh\pi_{n})) - \tr\mse(\bar\bbe_{n}(\wh\pi_{n})) \\
       & \leq \tr\wh\mse(\bar\bbe_{n}(\wh\pi_{n})) - \tr\mse(\bar\bbe_{n}(\pi)) 
       = T\as_{n}(\wh\pi_{n}),
     \end{aligned}
   \end{equation}
   since, by definition, $\pi:=\min_{p}\tr\mse(\bar\bbe(p))$.
   Secondly, one can also derive an upper bound for
   $T\as_{n}(\wh\pi_{n})$, as follows, 
   \begin{equation}\notag
     \begin{aligned}
       T\as_{n}(\pi) 
       &= \tr\wh\mse(\bar\bbe_{n}(\wh\pi_{n})) - \tr\mse(\bar\bbe_{n}(\pi)) \\
       &\leq \tr\wh\mse(\bar\bbe_{n}(\pi))) - \tr\mse(\bar\bbe_{n}(\pi)) 
       = T_{n}(\pi),
     \end{aligned}
   \end{equation}
   since $\wh\pi_{n}:=\min_{p}\tr\wh\mse(\bar\bbe(p))$. Therefore, we
   obtain the following sandwich inequality, 
   \begin{equation}\notag
        T_{n}(\wh\pi_{n}) \leq T_{n}\as(\wh\pi_{n}) \leq T_{n}(\pi),
   \end{equation}
   which could be re-expressed as follows, 
   \begin{equation}\notag
      |T_{n}\as(\wh\pi_{n})| \leq
      \max\{|T_{n}(\wh\pi_{n})|,|T_{n}(\pi)|\}
      \stack{p}{\longrightarrow} 0,
   \end{equation}
   where the weak convergence to zero was stated in proposition
   \ref{pro:mse}. Thus, we have demonstrated that 
   \begin{equation}\notag
       \min_{p} \tr\wh\mse(\bar\bbe_{n}(p)) 
       \stack{p}{\longrightarrow}
       \min_{p} \tr\mse(\bar\bbe_{n}(p)).
   \end{equation}
   By proposition \ref{pro:mse optimal}, we can use the MSEs of the OLS and TSLS
   estimators as upper bound for the RHS of the latter equation such
   that 
   \begin{equation}\notag       
      \min_{\pi} \tr\mse(\bar\bbe_{n}(\pi)) \leq 
            \tr\min\lb\mse(\wh\bbe_{n}),\mse(\wti\bbe_{n})\rb 
            \stack{p}{\longrightarrow} 0,
   \end{equation}
   since we know that the TSLS is MSE consistent. Therefore, 
   $\bar\bbe_{n}(\wh\pi_{n})\stack{L^{2}}{\to}\bbe$, as required. 
   Moreover, the weak consistency of the CLS estimator stated in (i) is a direct
   consequence of its MSE consistency \citep[see, for
   example,][p.313]{Bain1992}.  
\end{proof}

\begin{proof}[Proof of Corollary \ref{cor:sample optimal}]
   The proof of this inequality is analogous to the proof of
   proposition \ref{pro:mse optimal}, in which the theoretical
   expectation with respect to $F$, is replaced with the bootstrap
   expectation with respect to $F\as$. Here, the true parameter is
   taken to be $\E\as[\bbe^{\dag}_{n}]$, where $\bbe^{\dag}_{n}$ is an
   asymptotically unbiased estimator of interest, and 
   $\bar\bbe_{n}(\pi):=\pi\wh\bbe_{n} + (1-\pi)\bbe^{\dag}_{n}$ is the
   BCLS estimator. Letting $f(\bx):=\bx\pri\bx$, and using the
   convexity of this quadratic form, we obtain 
   \begin{equation}\notag     
       f\Big(\bar\bbe_{n}(\pi) - \E\as[\bbe^{\dag}_{n}]\Big)
          \leq \pi f\Big(\wh\bbe_{n} - \E\as[\bbe^{\dag}_{n}]\Big) +
          (1-\pi)f\Big(\bbe^{\dag}_{n} - \E\as[\bbe^{\dag}_{n}]\Big).
   \end{equation} 
   for every $\pi$. Moreover, by definition, $\tr(\wh\mse{}\as(\bar\bbe_{n})) =
   \E\as[f(\bar\bbe_{n})]$. Thus, using linearity, this gives
   \begin{equation}\notag      
      \tr\wh\mse{}\as(\bar\bbe_{n}(\pi)) \leq  
      \pi\tr\wh\mse{}\as(\wh\bbe_{n}) + (1-\pi)\tr\wh\mse{}\as(\bbe^{\dag}_{n}).
   \end{equation}
   The required inequality then follows by minimizing both sides with
   respect to $\pi\in[0,1]$, and noticing that every set of convex
   combinations contains the endpoints of the corresponding line
   segment, as in proposition \ref{pro:mse optimal}.
\end{proof}

\footnotesize
\singlespacing
\bibliography{/home/cgineste/ref/bibtex/Statistics,%
             /home/cgineste/ref/bibtex/Neuroscience}
\bibliographystyle{oupced3}

\addcontentsline{toc}{section}{Index}
\end{document}